\newcommand{\citeSM}{\cite[{\tiny SM}\kern-0.3em][]{SM}}
\newcommand{\be}{\begin{equation}}
\newcommand{\ee}{\end{equation}}
\newcommand{\tr}{\mathrm{Tr}}
\let\csname equation*\endcsname\relax
\let\csname endequation*\endcsname\relax
\newtheorem{proposition}{Proposition}
\newtheorem{lemma}{Lemma}
\newtheorem{fact}{Fact}
\definecolor{Red}{HTML}{E53E30}  
\definecolor{Green}{HTML}{00AD69}  
\definecolor{Blue}{HTML}{2171b5}
\definecolor{Purple}{HTML}{652F6C}  
\begin{document}
\title{Mixed-state entanglement from local randomized measurements}

\author{Andreas Elben}
\thanks{These authors contributed equally.}
\affiliation{Center for Quantum Physics, University of Innsbruck, Innsbruck A-6020, Austria}
\affiliation{Institute for Quantum Optics and Quantum Information of the Austrian Academy of Sciences,  Innsbruck A-6020, Austria}

\author{Richard Kueng}
\thanks{These authors contributed equally.}
\affiliation{Institute for Integrated Circuits, Johannes Kepler University Linz, Altenbergerstrasse 69, 4040 Linz, Austria}

\author{Hsin-Yuan (Robert) Huang}
\affiliation{Institute for Quantum Information and Matter, Caltech, Pasadena, CA, USA}
\affiliation{Department of Computing and Mathematical Sciences, Caltech, Pasadena, CA, USA}

\author{Rick van Bijnen}
\affiliation{Center for Quantum Physics, University of Innsbruck, Innsbruck A-6020, Austria}
\affiliation{Institute for Quantum Optics and Quantum Information of the Austrian Academy of Sciences,  Innsbruck A-6020, Austria}

\author{Christian Kokail}
\affiliation{Center for Quantum Physics, University of Innsbruck, Innsbruck A-6020, Austria}
\affiliation{Institute for Quantum Optics and Quantum Information of the Austrian Academy of Sciences,  Innsbruck A-6020, Austria}

\author{Marcello Dalmonte}
\affiliation{The Abdus Salam International Center for Theoretical Physics, Strada Costiera 11, 34151 Trieste, Italy}
\affiliation{SISSA, via Bonomea 265, 34136 Trieste, Italy}

\author{Pasquale Calabrese}
\affiliation{The Abdus Salam International Center for Theoretical Physics, Strada Costiera 11, 34151 Trieste, Italy}
\affiliation{SISSA, via Bonomea 265, 34136 Trieste, Italy}
\affiliation{INFN, via Bonomea 265, 34136 Trieste, Italy}

\author{Barbara Kraus}
\affiliation{Institute for Theoretical Physics, University of Innsbruck, A–6020 Innsbruck, Austria}

\author{John Preskill}
\affiliation{Institute for Quantum Information and Matter, Caltech, Pasadena, CA, USA}
\affiliation{Department of Computing and Mathematical Sciences, Caltech, Pasadena, CA, USA}
\affiliation{Walter Burke Institute for Theoretical Physics, Caltech, Pasadena, CA, USA}
\affiliation{AWS  Center  for  Quantum  Computing,  Pasadena,  CA, USA}

\author{Peter Zoller}
\affiliation{Center for Quantum Physics, University of Innsbruck, Innsbruck A-6020, Austria}
\affiliation{Institute for Quantum Optics and Quantum Information of the Austrian Academy of Sciences,  Innsbruck A-6020, Austria}

\author{Beno\^it Vermersch}
\affiliation{Center for Quantum Physics, University of Innsbruck, Innsbruck A-6020, Austria}	
\affiliation{Institute for Quantum Optics and Quantum Information of the Austrian Academy of Sciences,  Innsbruck A-6020, Austria}
\affiliation{Univ.  Grenoble Alpes, CNRS, LPMMC, 38000 Grenoble, France}

\begin{abstract}
	We propose a method for detecting bipartite entanglement in a many-body mixed state based on estimating moments of the partially transposed density matrix. The estimates are obtained by performing local random measurements on the state, followed by post-processing using the classical shadows framework. Our method can be applied to any quantum system with single-qubit control. We provide a detailed analysis of the required number of experimental runs, and demonstrate the protocol using existing experimental data [Brydges et al, Science \textbf{364}, 260 (2019)].
\end{abstract}
\maketitle

Engineered quantum many-body systems exist in today's laboratories as Noisy Intermediate Scale Quantum Devices (NISQ) \cite{Preskill2018}. This provides us with novel opportunities to study and quantify  entanglement -- a fundamental
concept in both quantum information theory~\cite{Horodecki2009} and many-body quantum physics \cite{Eisert2010,Calabrese2016}.
For pure (or nearly-pure) states, entanglement has been detected by measuring the second R\'{e}nyi entropy ~\cite{Horodecki1996,Horodecki2003,Islam2015,Kaufman2016, Linke2018,Brydges2019}.
This has been achieved via, for instance, many-body quantum interference~\cite{Alves2004,Daley2012, Islam2015,Kaufman2016, Linke2018} (see also \cite{Cardy2011,Abanin2012}) and randomized measurements~\cite{VanEnk2012,Elben2018,Elben2018a,Brydges2019,Huang2020}.
However, many states of interest are actually highly mixed -- either because of decoherence, or because they describe interesting subregions of a larger, globally entangled, system. Developing protocols which detect and quantify mixed-state entanglement on intermediate scale quantum devices is thus an outstanding challenge.

Below we propose and experimentally demonstrate conditions for mixed-state entanglement and measurement protocols based on  the positive partial transpose (PPT) condition~\cite{Peres1996,Horodecki1996,Horodecki2009}.
Consider two partitions $A$ and $B$ described by a (reduced) density matrix $\rho_{AB}$. The well-known \emph{PPT condition}
checks if the partially transposed (PT)  density matrix $\rho_{AB}^{T_A}$ \footnote{The partial transpose (PT) operation -- acting on subsystem $A$ -- is defined as $(\ket{k_A,k_B}\bra{l_A,l_B})^{T_A}=\ket{l_A,k_B}\bra{k_A,l_B}$, where $\{\ket{k_A,k_B}\}$ is a product basis of the joint system $AB$.} is positive semidefinite, i.e.\ all eigenvalues are non-negative.
If the PPT condition is violated -- i.e.~$\rho_{AB}^{T_A}$ does have negative eigenvalues -- $A$ and $B$ must be entangled.
It is possible to turn the PPT condition into a quantitative entanglement measure. The \emph{negativity} $\mathcal{N}(\rho_{AB})=\sum_{\lambda<0}|\lambda|$, with $\lambda$ the spectrum of $\rho_{AB}^{T_A}$, is positive if and only if the underlying state $\rho_{AB}$ violates the PPT condition~\cite{Vidal2001}.
While applicable to mixed states, computing the negativity requires accurately estimating the full spectrum of $\rho_{AB}^{T_A}$.
We bypass this challenge by considering  moments of the partially transposed density matrix (PT-moments)  instead:
\begin{align}
	p_n=\mathrm{Tr}[(\rho_{AB}^{T_A})^n] \quad \text{for $n=1,2,3,\ldots$}.
	\label{eq:negativities}
\end{align}
These have been first  studied in quantum field theory to quantify correlations in many-body systems~\cite{Calabrese2012}. Clearly, $p_1 
=\mathrm{tr}(\rho_{AB})=1$, while $p_2$ 
is equal to the purity $\mathrm{tr}[ \rho_{AB}^2]$ (see Table~1 in the Supplemental Material~\cite{SM} 
 (SM) for a visual derivation). Hence, $p_3$ is the lowest PT-moment that captures meaningful information about the partial transpose  (see also Ref.~\cite{Gray2017}).

In this letter, we first show that the first three PT-moments can be used to define a simple yet powerful test for bipartite entanglement:
\begin{equation}
	\rho_{AB} \in \mathrm{PPT} \implies p_3\geq p_2^{2}.
	\label{eq:p3PPT}
\end{equation} 
The \emph{$p_3$-PPT condition} is the contrapositive of this assertion: if $p_3 < p_2^2$, then $\rho_{AB}$ violates the PPT condition [see Fig.~\ref{fig:setup}a)] and must therefore be entangled (see SM  \cite{SM} for the proof).
Similar to the PPT condition, the $p_3$-PPT condition applies to mixed states and is completely independent of the state in question. 
This is a key distinction from entanglement witnesses \cite{Terhal2000,Guhne2006},  which can be more powerful, but which usually require detailed prior information about the state.
While in general weaker than the full PPT condition, the $p_3$-PPT condition relies on comparing two comparatively simple functionals and outperforms other state-independent entanglement detection protocols, like comparing purities of various nested subsystems~\cite{Horodecki1996,Islam2015,Kaufman2016,Linke2018,Brydges2019,SM}.  
 As shown in the SM~\cite{SM}, the $p_3$-PPT condition becomes equivalent to the PPT condition for Werner states (in this case, it is a necessary and sufficient condition for bipartite entanglement \cite{Watrous2018}).

The second main contribution of this letter is a measurement protocol to determine PT-moments in NISQ devices.
\begin{figure}
	\includegraphics[width=1.\linewidth]{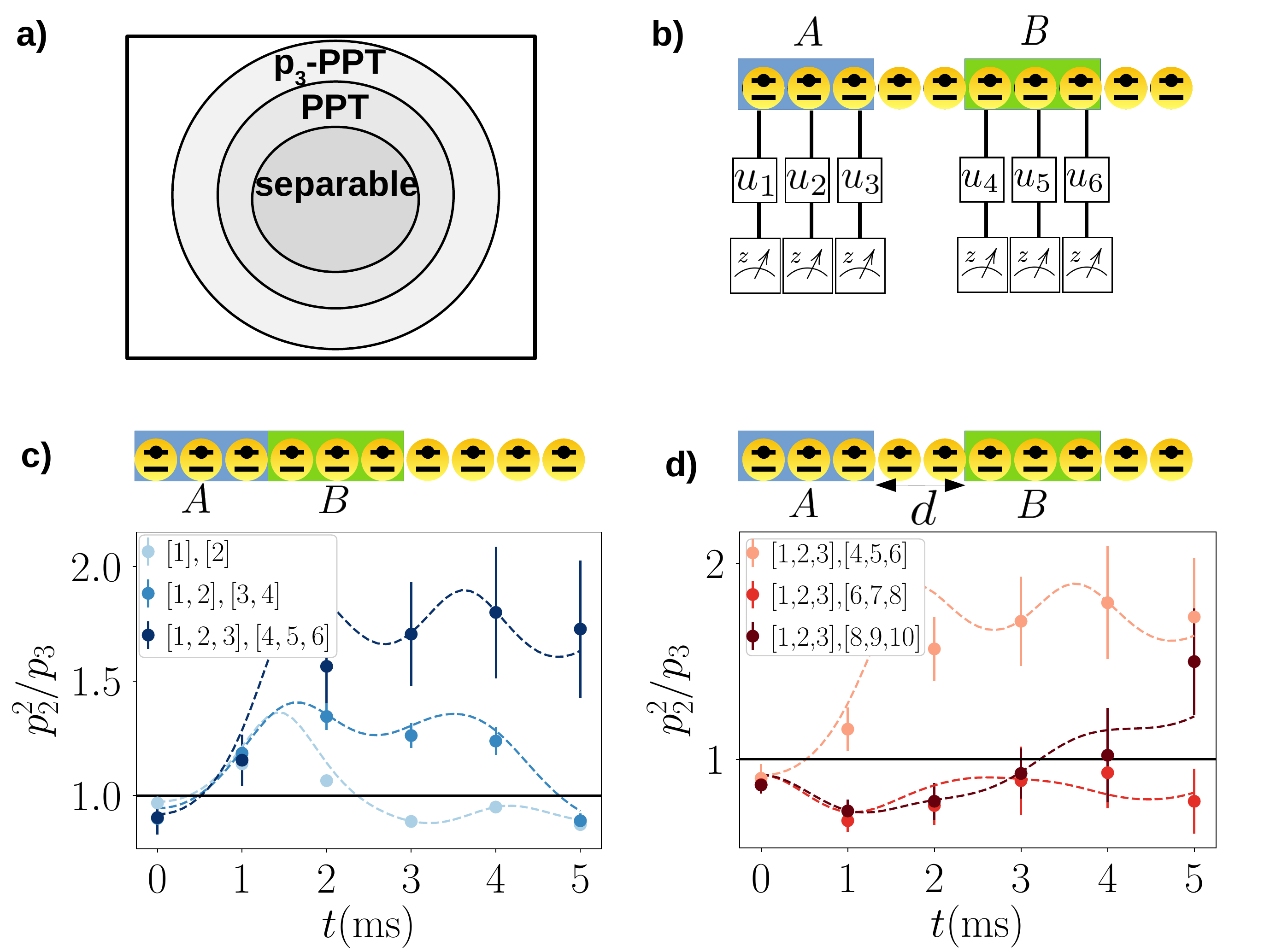}
	\caption{{\it Protocol and illustrations}.
		a) The $p_3$-PPT condition  can  be used to demonstrate mixed-state bipartite entanglement with	PT-moments.
		Separable states  are PPT states and also fulfill the $p_3$-PPT condition. Thus, quantum states which violate the $p_3$-PPT condition must be bipartite entangled [see also Eq.~\eqref{eq:p3PPT}].
		b) In our protocol, PT-moments are measured by applying local random unitaries followed by computational basis measurements. 
		c-d) Violation of the $p_3$-PPT condition, i.e.~$p_2^2>p_3$, is experimentally observed for connected c) and disconnected (separated by $d=0,2,4$ spins) d) partitions $A$ and $B$  at various times $t$  after a quantum quench~\cite{Brydges2019}. Dots: experimental results. Error bars: Jackknife estimates of statistical errors. Lines: numerical simulations including the decoherence model presented in Ref.~\cite{Brydges2019}.
	}
	\label{fig:setup}
\end{figure}
Crucially, we employ randomized measurements implemented with \emph{local} (single-qubit) random unitaries, see Fig.~\ref{fig:setup}b) which are readily available in NISQ devices and have  been already successfully applied to measure entanglement entropies, many-body state-fidelities, and out-of-time ordered correlators~\cite{Brydges2019,Mi2020,Elben2020,Joshi2020}.
In contrast to previous proposals for measuring PT-moments, our protocol does not rely on  many-body interference between identical state copies~\cite{Horodecki2003,Gray2017,Cornfeld2019}, or on using global entangling random unitaries~\cite{Zhou2020} built from interacting Hamiltonians~\cite{Dankert2009,Nakata2017,Elben2018,Vermersch2018}. Instead, it only requires single-qubit control, and allows for the estimation of many distinct PT-moments from the same data.
In particular, arbitrary orders $n \geq 2$ and arbitrary (connected, as well as disconnected) partitions $A$, $B$ can be measured.

While the experimental setup for our measurement protocol
is reminiscent of quantum state tomography~\cite{Gross2010,Cramer2010,Torlai2018,Guta2020}, there are fundamental differences regarding the required number of measurements (as independent state copies), and the way the measured data is processed. Without strong assumptions on the state~\cite{Cramer2010,Torlai2018},
performing tomography to infer an $\epsilon$-approximation of an unknown density matrix $\rho_{AB}$ (e.g.\ in order to subsequently compute $\epsilon$-approximations of $p_n$) requires (at least) order $2^{|AB|}\mathrm{rank}(\rho_{AB})/\epsilon^2$ measurements~\cite{Haah2017,Wright2016}. In the high accuracy regime ($\epsilon \ll 1$), our direct estimation protocol instead only requires  order $2^{|AB|}/\epsilon^2$ measurements.
For highly mixed states -- the central topic of this work -- this discrepancy heralds a significant reduction in measurement resources.
Furthermore, we predict PT-moments through a 'direct' and (multi-) linear postprocessing of the measurement data represented as 'classical shadows'~\cite{Huang2020}. Thus, data processing
is cheap -- both in memory and runtime -- and can be massively parallelized.  Similar to previous measurement \cite{VanEnk2012,Elben2018,Brydges2019,Vermersch2019,Joshi2020,Elben2020,Elben2020b,Huang2020,Cian2020} and entanglement detection \cite{Tran_2015,Tran_2016,Ketterer2019,Zhang_2020, Knips_2020}  protocols based on randomized measurements, this is another distinct advantage over tomography which typically requires expensive data-processing algorithms~\cite{Gross2010} or training a neural network~\cite{Torlai2018}.

Finally, we demonstrate our measurement protocol and the $p_3$-PPT condition experimentally in the context of the quantum simulation of many-body systems.   Here, PT-moments have been shown to reveal universal properties of quantum phases of matter~\cite{Calabrese2012,Calabrese2013,Ruggiero2016,Javanmard2018,Turkeshi2019} and their transitions~\cite{Calabrese2012,Calabrese2013,Chung2014,Wu2019}. Out of equilibrium, PT-moments allow to understand  the dynamical process of thermalization~\cite{Coser2014,Alba2019,Kudler-Flam2020,Alba2020}, and the fate of (many-body) localization in presence of decoherence~\cite{Wybo2020}.
In this work, we analyze the data of Ref.~\cite{Brydges2019} corresponding to the out-of-equilibrium dynamics in a spin model with long-range interactions, which was implemented in a $10$-qubit trapped ion quantum simulator.  In particular, we certify the presence of mixed-state entanglement via the $p_3$-PPT condition [see Fig.~\ref{fig:setup}(c-d), and for details below]. Furthermore, we  monitor the time-evolution of $p_3$ and observe dynamical signatures of entanglement spreading and thermalization~\cite{Coser2014,Alba2019}.

{\it Protocol--}
The experimental ingredients to measure PT-moments build on resources similar to the ones presented in Ref.~\cite{Elben2018} and realized in Ref.~\cite{Brydges2019} to measure R\'enyi entropies. The key new element is the  post-processing of the experimental data \cite{Huang2020}. As shown in Fig.~\ref{fig:setup},
the quantum state of interest is realized in a system of $N$ qubits. In the partitions $A$ and $B$, consisting of $|A|$ and $|B|$ spins, respectively, a randomized measurement is performed by applying  random local unitaries $u=u_1\otimes \dots \otimes u_{|AB|}$, with $u_{i}$ independent single qubit rotation sampled from a unitary $3$-design \cite{Gross2007,Dankert2009}, and a subsequent projective measurement in the computational basis with outcome $\mathbf{k}=(k_1,\dots, k_{|AB|})$. This is subsequently repeated with $M$ different random unitaries such that a data set of $M$ bitstrings $\mathbf{k}^{(r)}$ with $r = 1, \dots, M$ is collected.

From this data set, the PT-moments  $p_n$ can be estimated \emph{without} having to reconstruct the density matrix $\rho_{AB}$, 
and with a  significantly
smaller number of experimental runs $M$ than required for full quantum state tomography. To obtain such estimates, we rely on two observations. First, each outcome $\mathbf{k}^{(r)}$ can be used to define an unbiased estimator 
\begin{equation}
	\hat{\rho}_{AB}^{(r)}=\bigotimes_{i \in AB} \left[ 3 (u_{i}^{(r)})^\dag \ket{k_{i}^{(r)}}\bra{k_{i}^{(r)}}u_{i}^{(r)}-\mathbb{I}_2\right]\label{eq:rho_s}
\end{equation}
of the density matrix $\rho_{AB}$, i.e.\ $\mathbb{E}[\hat \rho_{AB}^{(r)}] = \rho_{AB}$ with the expectation value taken over the unitary ensemble and projective measurements~\cite{Ohliger_2013,Elben2018a,Paini2019,Huang2020}.  Second, the PT-moments $p_n$ can be viewed as an  expectation value  of a $n$-copy observable $ \overrightarrow{\Pi}_{A}\, \overleftarrow{\Pi}_{B}$ evaluated on $n$-copies of the original density matrix $\rho_{AB}$,
\begin{equation}
	p_n=\mathrm{Tr}\left[ \overrightarrow{\Pi}_{A}\, \overleftarrow{\Pi}_{B} \, \rho_{AB}^{\otimes n}\right].
	\label{eq:pnmulticopy}
\end{equation}
Here, $\overrightarrow{\Pi}_{A}$ and $\overleftarrow{\Pi}_{B}$ are  $n$-copy  cyclic permutation operators  $\overrightarrow{\Pi}_A\ket{\mathbf{k}^{[1]}_A,\mathbf{k}^{[2]}_A,\dots,\mathbf{k}^{[n]}_A}=\ket{\mathbf{k}^{[n]}_A,\mathbf{k}^{[1]}_A,\dots,\mathbf{k}^{[n-1]}_A}$, $\overleftarrow{\Pi}_B\ket{\mathbf{k}^{[1]}_B,\mathbf{k}^{[2]}_B,\dots,\mathbf{k}^{[n]}_B}=\ket{\mathbf{k}^{[2]}_B,\dots,\mathbf{k}^{[n]}_B,\mathbf{k}^{[1]}_B}$ that act on the partitions $A$ and $B$,  respectively.

Estimators of the PT-moments $p_n$ can now be derived from Eqs.~\eqref{eq:rho_s} and \eqref{eq:pnmulticopy} using U-statistics \cite{Hoeffding1992}. Replacing $\rho^{\otimes n}$ with  $\hat{\rho}^{(r_1)}\otimes \dots \otimes \hat{\rho}^{(r_n)}$ where $r_1\neq r_2 \neq \dots \neq r_n$, corresponding to independently sampled random unitaries $u^{(r_1)},\dots,u^{(r_n)}$, we define the U-statistic
\begin{align}
	\hat{p}_n=  \frac{1}{n!}\binom{M}{n}^{-1} \! \! \!
	\sum_{\substack{r_1\neq r_2\neq \dots \neq r_n}} \!\!\! \mathrm{Tr}\left[ \overrightarrow{\Pi}_{A}\, \overleftarrow{\Pi}_{B}  \hat \rho_{AB}^{(r_1)} \otimes \dots \otimes \hat \rho_{AB}^{(r_n)}\right] .\label{eq:p_n}
\end{align}
It follows from the defining properties of $U$-statistics that $\hat{p}_n$ is  an unbiased estimator of $p_n$, i.e.~$\mathbb{E}[\hat{p}_n] = p_n$ with the expectation value taken over the unitary ensemble and projective measurements \cite{Hoeffding1992}. Its variance governs the statistical errors arising from finite $M$. Furthermore, a quick inspection of  Eqs.~\eqref{eq:rho_s} and \eqref{eq:pnmulticopy} reveals that the summands in  Eq.~\eqref{eq:p_n} completely factorize into contractions of single qubit  matrices, $\mathrm{Tr}[ \overrightarrow{\Pi}_{A}\, \overleftarrow{\Pi}_{B}  \hat \rho_{AB}^{(r_1)} \otimes \dots \otimes \hat \rho_{AB}^{(r_n)}] = \Pi_{i \in A} \tr[\hat \rho_i^{(r_1),T} \cdots \hat \rho_i^{(r_n),T}  ]\Pi_{i \in B} \tr[\hat \rho_i^{(r_1)} \cdots\hat \rho_i^{(r_n)}  ]$,  with $\hat\rho_{AB}^{(r)}=\otimes_{i \in AB}\hat\rho^{(r)}_i$ as in Eq.~\eqref{eq:rho_s}. Thus, given $M$ observed bitstrings $\mathbf{k}_r$, one can determine $\hat p_n$  with  classical data processing scaling as $M^n |AB|$, without storing exponentially large  matrices on the classical post-processing device.

\begin{figure}
	\includegraphics[width=1.\linewidth]{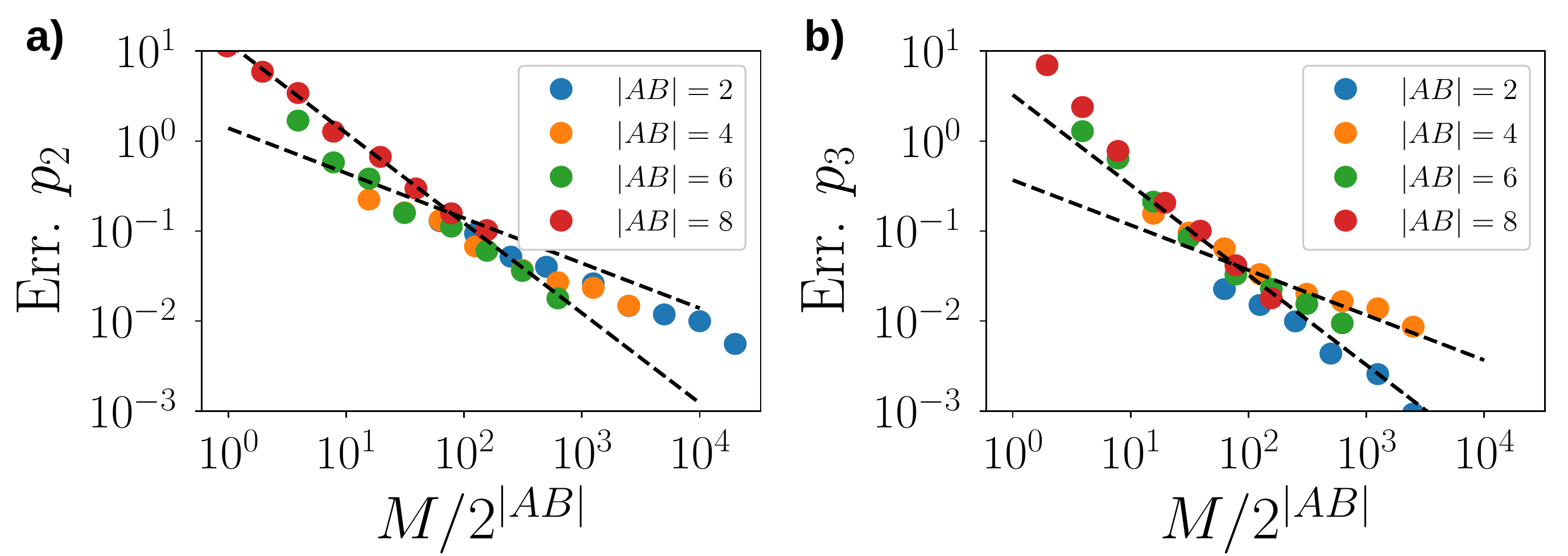}
	\caption{Statistical errors for the GHZ state. Dashed lines represent scalings of  $\propto 1/M$, and $\propto1/\sqrt{M}$. In both cases, the number of measurements to
		estimate $p_2$ a) and $p_3$ b) with accuracy $~0.1$ is of the order of  $100\times 2^{|AB|}$.}
	\label{fig:staterrors}
\end{figure}

{\it Statistical errors--}
As demonstrated in Fig.~\ref{fig:setup}(c,d), PT-moments can be inferred using a finite number of experimental runs $M$. 
Here, we investigate in detail the statistical errors arising from the finite value of  $M$.

Analytically, we bound statistical errors based on the variance of the multi-copy observable in question. For $p_2=\mathrm{Tr} [(\rho_{AB}^{T_A})^2]$, our analysis reveals that the error decay rate depends on 
number of measurements $M$. In the large $M$ regime, the error is proportional to $2^{|AB|}p_2/\sqrt{M}$. This error bound is multiplicative -- i.e.\ the size of the error is proportional to the size of the target $p_2$ -- and  $1/\sqrt{M}$ captures the expected decay rate for an estimation procedure that relies on empirical averaging. For small and intermediate values of $M$, the estimation error is instead bounded by $8 \times 2^{1.5|AB|}/M$. While this is worse in terms of constants, the error decays at a much faster rate proportional to $1/M$. 
Qualitatively similar results apply for estimating $p_3 = \mathrm{Tr}[ (\rho_{AB}^{T_A})^3]$, but there can be three decay regimes.
For large $M$, the estimation error is bounded by $2^{|AB|}p_2^2/\sqrt{M}$. This again captures the asymptotically optimal rate $1/\sqrt{M}$ associated with empirical averaging, but the constant is suppressed by $p_2^2$, not $p_3$ itself. For intermediate $M$, the error decay rate is proportional to $1/M$, while an even faster rate $\propto 1/M^{3/2}$ governs the error decay for small $M$. We refer to the SM for detailed statements and proofs. 

Now, we test these predictions numerically by simulating the experimental protocol for various values of $M$ in systems with $N=|AB|$ qubits where a pure GHZ state $\rho=\ket{\phi_{\text{GHZ}}}\!\bra{\phi_{\text{GHZ}}}$ is prepared. Here, $A$ corresponds to the first $N/2$ qubits, and $B$ is the complement. The results are shown in Fig.~\ref{fig:staterrors} and support our analytical error bounds. They  highlight in particular that the number of measurement repetitions necessary to achieve a desired accuracy of $\sim0.1$ scales as $2^{|AB|}$. This enables the estimation 
of PT-moments in state of the art platforms with high repetition rates.  These findings are discussed and confirmed for the ground state of the transverse Ising model in the  SM  \cite{SM}. 
\begin{figure}
	\includegraphics[width=0.99\columnwidth]{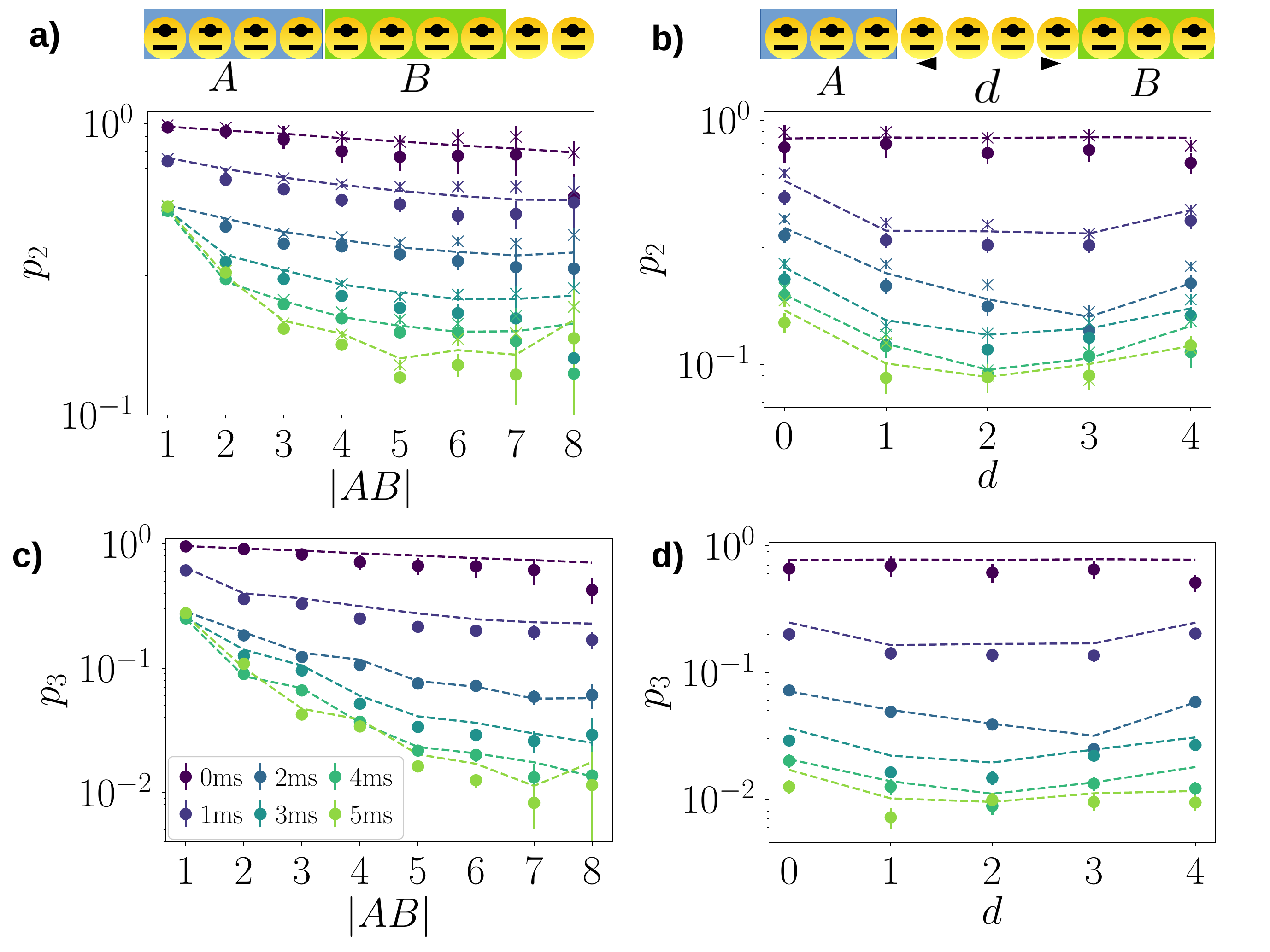}
	\caption{{\it Reconstruction of $p_2=\mathrm{Tr} [(\rho_{AB}^{T_A})^2]$ and $p_3=\mathrm{Tr} [(\rho_{AB}^{T_A})^3]$ from experimental data~\cite{Brydges2019}}. $A$ and $B$ are parts of a total system of  10 qubits. In 
		a) and c), we take $A= [ 1,\dots,\lfloor |AB|/2 \rfloor]$  and $B=[\lfloor |AB|/2 \rfloor +1, \dots, |AB|]$. In 
		b) and d), we take  $A=\{1,2,3\}$ and $B=\{4+d,5+d,6+d\}$ with $d=0,1,\dots 4$.
		Dots are obtained with the shadow estimator [Eq.~\eqref{eq:p_n} second and third order], crosses with the direct  estimator (second order) of Ref.~\cite{Brydges2019}. Different colors correspond to different times after the quantum quench with purple [0\,ms] corresponding to the initial product state. For each time, $M=500$ unitaries and $P=150$ measurements per unitary were used. Lines: theory simulation including  decoherence~\cite{Brydges2019}. The ratio $p_2^2/p_3$, detecting entanglement according to the $p_3$-PPT condition, is shown in Fig.~\ref{fig:setup} c) and d).}
	\label{fig:exp}
\end{figure}

{\it PT-moments in a trapped-ion quantum simulator--}

Below, we discuss the experimental demonstration of the measurement of PT-moments in a trapped ion quantum simulator. To this end, we evaluate data taken in the context of Ref.~\cite{Brydges2019}. Here, the R\'enyi entropy growth in quench dynamics was investigated. 
The system, consisting in total of $N=10$ qubits, was initialized in the N\'eel state $\ket{\uparrow \downarrow \uparrow \downarrow \dots}$, and time-evolved with 
\begin{equation}
	H_{\mathrm{XY}} = \hbar\sum_{i<j}J_{ij}(\sigma^{+}_{i}\sigma^{-}_{j}+\sigma^{-}_{i}\sigma^{+}_{j}) + \hbar B \sum_{i}\sigma^{z}_{i}
	\label{eq:XY-Hamiltonian}
\end{equation}
with  $\sigma_i^{z}$  the third spin-$1/2$ Pauli operator, $\sigma_i^{+}(\sigma_i^{-})$ the spin-raising (lowering) operators acting on spin $i$, and $J_{ij} \approx J_0/ \lvert{i-j}\rvert^{\alpha}$ the coupling matrix with an approximate power-law decay $\alpha\approx 1.24$ and $J_0=420 s^{-1}$.  After time evolution, randomized measurements were performed, using $M=500$ random unitaries  and $P=150$ projective measurements per random unitary. 

From this data, PT-moments can be inferred~\footnote{Theoretically, an ideal distribution of the measurement budget $M\cdot P$ would consist in setting $P=1$, i.e. sampling new random unitaries for each experimental run. Experimentally, it might be however beneficial to use $P\geq 1$. In this situation, we replace the estimators defined in Eq.~(3) with $\hat{\rho^{(r)}} = \sum_{s=1}^{P} \hat{\rho}^{(r,s)} $  where $\hat{\rho}^{(r,s)}=\bigotimes_i (3(u_i^{(r)})^\dag \ket{k_{i}^{(r,s)}}\bra{k_{i}^{(r,s)}}u_i^{(r)}-\mathbb{I}_2)$ and $k_{i}^{(r,s)}$ the outcome of the measurement $s$ obtained after the application of the unitary $r$. }, with results presented in Fig.~\ref{fig:exp}. For the purity $p_2$ a) b), we observe good agreement with theory for up to $N=8$ qubits partitions, in particular the raise of $p_2$ for partition sizes approaching the total system size which is expected for such nearly pure states. For $9$, $10$-qubit partitions, the data is not shown since the relative statistical error of the estimated data points approaches unity \footnote{The distribution of the measurement budget $M\cdot P$ in Ref.~\cite{Brydges2019} into unitaries $M$ and projective measurements per unitary $P$ has been optimized for the purity estimator $\tilde p_2$ presented in Ref.~\cite{Brydges2019} which differs from $\hat p_2$ defined in Eq.~\eqref{eq:p_n}. 
	Thus, for the present data set \cite{Brydges2019} with $M=500$ and $P=150$, the statistical uncertainty of the $\tilde p_2$ is smaller than for $\hat p_2$ which performs best for $P=\mathcal{O}(1)$ and a correspondingly larger number of unitaries $M$.}. We however note that the measured $\hat p_2$ is  slightly underestimated. This is due to imperfect realizations of the random unitaries, which tend to reduce the estimation of the overlap $\mathrm{Tr}(\rho_{r_1}\rho_{r_2})$. This effect is also present when measuring cross-platform fidelities~\cite{Elben2020}. For the third PT-moment $p_3$ c), d), we observe the same kind of agreement between theory value and experimental measurements. In particular, at large partition sizes, the protocol is able to measure with high precision  small values of $p_3$. These small values are indeed fundamental to detect entanglement:  a PPT violating state has a negative eigenvalue which reduces the value of $p_3$, in comparison with the purity $p_2$. This effect is mathematically captured by the $p_3$-PPT condition and allowed us to detect PPT violation and thus entanglement for many-body mixed states [see Fig.~\ref{fig:setup}c)]. 
In the SM \cite{SM}, we present additional simulations showing  the power  of the $p_3$-PPT condition, in comparison with the negativity and the condition based on purities of nested subsystems. 

The third PT-moment $p_3$ does not only allow to detect mixed-state entanglement. It can also be used to study the dynamics of entanglement in various many-body quantum systems~\cite{Calabrese2012,Chung2014,Coser2014,Wu2019,Wybo2020}.
Here, we analyze the behavior of the dimensionless ratio  $R_3=-\log_2[p_3/\mathrm{Tr}(\rho_{AB}^3)]$, which, as shown in quantum field theory, follows the same universal behavior as the negativity during evolution with a local  Hamiltonian~\cite{Coser2014}. We remark that $R_3$ is however only well-defined for states with $p_3>0$ (Werner states in large dimensions are a counter-example~\cite{SM}). Furthermore, $R_3$ is not an entanglement monotone~\cite{Wybo2020}. It vanishes for all product states, but can still be strictly positive for certain separable states~\cite{Horodecki2009,Wybo2020}.

Fig.~\ref{fig:R3} illustrates the time evolution of $R_3$ for (a) connected  and (b) disconnected  subsystems $AB$, respectively.
The appearing peaks of $R_3$  have been predicted and analyzed for various one-dimensional 
quantum systems subject  to  local interactions~\cite{Coser2014,Alba2019} (and have also been studied in the context of R\'enyi mutual information~\cite{Alba2019a,Somnath2020}).
They can be understood in terms of propagating quasi-particles which describe collective excitations in the system~\cite{Coser2014,Alba2019}.
In this picture, entanglement between two partitions $A$ and $B$  is induced by the presence of entangled pairs of quasi-particles shared between $A$ and $B$.  For each pair, the individual quasi-particles propagate in opposite directions and start to entangle, in the course of the time evolution, partitions that are more and more separated~\cite{Coser2014,Alba2019}.
In particular, for two adjacent partitions (a), $R_3$ increases at early times, which is consistent with the picture of shared  pairs of entangled quasi-particles entering the two partitions immediately. 
After a certain time $R_3$ reaches a maximum and starts to decrease, which can be understood as the time when the quasi-particles start to ‘escape’ the region $AB$. 
For separated partitions (b), the peaks are delayed due to the finite speed of propagation of the quasi particles. In addition, their maximum value is lowered because of the finite life-times of quasi-particles. The latter feature is characteristic 
to chaotic (non-integrable) thermalizing systems~\cite{Alba2019a} and is in our case further enhanced by decoherence.

\begin{figure}
	\includegraphics[width=1.\linewidth]{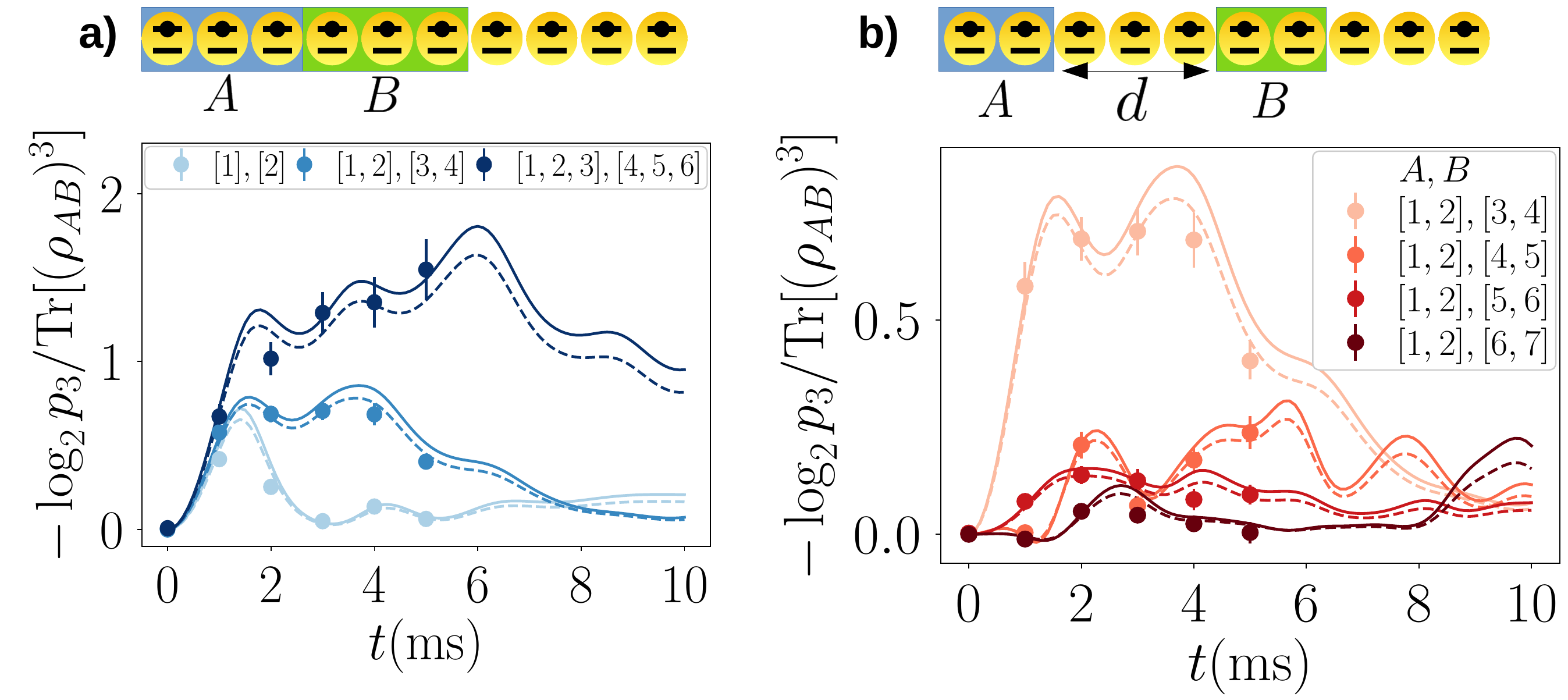}
	\caption{{\it Evolution of the ratio $R_3$ from experimental data~\cite{Brydges2019}}.
		(a) Connected partitions. (b) Disconnected partitions separated by $d=0,1,2,3$ spins.  Different colors correspond to different partitions $AB$.
		Dots are obtained with the shadow estimator Eq.~\eqref{eq:p_n} using experimental data \cite{Brydges2019}. Solid (dashed) lines: theory simulation of  unitary dynamics (including decoherence \cite{Brydges2019}).}
	\label{fig:R3}
\end{figure}

\textit{Conclusion--}
Our protocol extends the paradigm of randomized measurements, yielding the first direct measurement of PT-moments in a many-body system. 
U-statistics provides the key ingredient there and enables us to harness a remarkable advantage over state tomography in terms of statistical errors. At a fundamental level, it is therefore natural to investigate how to access new important physical quantities based on random measurement data, and with significant savings in terms of measurement and classical postprocessing over existing methods.
This approach can be used to derive protocols to directly infer entanglement measures (including non-polynomial functions of the density matrix), such as the von-Neumann entropy and the negativity.

\begin{acknowledgments}
	We are grateful to Alireza Seif who pointed out interesting error scaling effects for classical shadows in a Scirate comment addressing Ref.~\cite{Huang2020}. 
	We thank M.~Knap, S.~Nezami, F.~Pollmann and E.~Wybo for discussions and valuable suggestions, as well as M.~Joshi for the careful reading and comments on the manuscript.
	T.~Brydges, P.~Jurcevic, C.~Maier, B.~Lanyon,  R.~Blatt, and C.~Roos have generously shared the experimental data of Ref.~\cite{Brydges2019}. Simulations were performed with the QuTiP library~\cite{qutip2}.
	Research in Innsbruck is supported by the European Union's Horizon 2020 research and innovation programme under Grant Agreement No.~817482 (PASQuanS) and No.~731473 (QuantERA via QTFLAG), and by the Simons Collaboration on Ultra-Quantum Matter, which is a grant from the Simons Foundation (651440, P.Z.). B. K. acknowledges financial support from the Austrian Academy of Sciences via the Innovation Fund 'Research,
	Science and Society', the SFB BeyondC (Grant No. F7107-N38), and the Austrian Science Fund (FWF) grant DKALM: W1259-N27.
	Research at Caltech is supported by the Kortschak Scholars Program, the US Department of Energy (DE-SC0020290), the US Army Research Office (W911NF-18-1-0103), and the US National Science Foundation (PHY-1733907). The Institute for Quantum Information and Matter is an NSF Physics Frontiers Center. Research in Trieste is partly supported by European Research Council (grant No 758329 and 771536) and by the Italian Ministry of Education under the FARE programme. BV acknowledges funding from the Austrian Science Fundation (FWF, P.~32597N).
\end{acknowledgments}

\bibliographystyle{apsrev4-1}
%

\clearpage
\appendix
\counterwithin{figure}{section}

\section{The $p_3$-PPT condition}

In this section we present, prove and discuss the $p_3$-PPT condition.
The $p_3$-PPT condition is the contrapositive of the following statement about moments of positive semidefinite matrices with unit trace.

\begin{proposition} 
	\label{prop:p3}
	For every positive semidefinite matrix $X$ with unit trace ($\mathrm{Tr}(X)=1$) it holds that
	\begin{equation}
		\mathrm{tr}(X^2)^2 \leq \mathrm{tr}(X^3). \label{eq:p3-condition}
	\end{equation}
\end{proposition}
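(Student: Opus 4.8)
The plan is to reduce the claimed operator inequality to an elementary inequality about the eigenvalues of $X$ and then settle the latter with a single application of the Cauchy--Schwarz inequality.

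First I would invoke the spectral theorem. Since $X$ is positive semidefinite it is Hermitian, so it admits a spectral decomposition with real eigenvalues $\lambda_1,\dots,\lambda_d\geq 0$. Traces are basis independent, so the hypothesis $\mathrm{Tr}(X)=1$ becomes $\sum_i \lambda_i = 1$, while $\mathrm{tr}(X^2)=\sum_i \lambda_i^2$ and $\mathrm{tr}(X^3)=\sum_i \lambda_i^3$. Thus Eq.~\eqref{eq:p3-condition} is equivalent to the scalar statement
\begin{equation}
	\Big(\sum_i \lambda_i^2\Big)^2 \leq \sum_i \lambda_i^3 \qquad \text{whenever } \lambda_i\geq 0 \text{ and } \sum_i\lambda_i=1 .
\end{equation}

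The key step is to factor each summand asymmetrically as $\lambda_i^2 = \lambda_i^{3/2}\cdot\lambda_i^{1/2}$, which is legitimate precisely because positive semidefiniteness guarantees $\lambda_i\geq 0$ and hence real half-integer powers. Cauchy--Schwarz then gives
\begin{equation}
	\sum_i \lambda_i^2 = \sum_i \lambda_i^{3/2}\,\lambda_i^{1/2} \leq \Big(\sum_i \lambda_i^3\Big)^{1/2}\Big(\sum_i \lambda_i\Big)^{1/2} .
\end{equation}
Inserting the unit-trace constraint $\sum_i \lambda_i = 1$ and squaring both sides yields exactly $(\sum_i \lambda_i^2)^2 \leq \sum_i \lambda_i^3$, which is the desired inequality.

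The argument is short, so there is no serious obstacle; the only real content is spotting the split $\lambda_i^2=\lambda_i^{3/2}\lambda_i^{1/2}$ rather than the naive $\lambda_i\cdot\lambda_i$, and recognizing that nonnegativity of the spectrum is exactly what makes the fractional powers (equivalently, the operator square root $X^{1/2}$) well defined. If desired, one can avoid diagonalization entirely by writing $\mathrm{tr}(X^2)=\mathrm{tr}(X^{3/2}X^{1/2})$ and applying Cauchy--Schwarz for the Hilbert--Schmidt inner product, $\mathrm{tr}(X^{3/2}X^{1/2})\leq \sqrt{\mathrm{tr}(X^3)}\,\sqrt{\mathrm{tr}(X)}$. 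It is also worth recording the equality conditions: Cauchy--Schwarz is saturated iff the vectors $(\lambda_i^{3/2})_i$ and $(\lambda_i^{1/2})_i$ are proportional, i.e.\ iff all nonzero eigenvalues coincide, which recovers the flat-spectrum cases (including the maximally mixed and the rank-one states) as the extremal ones.
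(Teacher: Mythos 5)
Your proof is correct, and it reaches the result by a somewhat more compressed route than the paper's. The paper first isolates an auxiliary Schatten-norm inequality, $\|X\|_2^4 \leq \|X\|_1\,\|X\|_3^3$ for \emph{every} Hermitian $X$, proved in two steps: H\"older's inequality with exponents $(3,3/2)$ applied to $\langle\lambda,\lambda\rangle$, followed by a Cauchy--Schwarz estimate of the leftover $\ell_{3/2}$-norm via the split $|\lambda_i|^{3/2}=|\lambda_i|\cdot|\lambda_i|^{1/2}$; Proposition~\ref{prop:p3} then follows by specializing to positive semidefinite unit-trace matrices, where $\|X\|_1=\mathrm{Tr}(X)=1$. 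You instead apply a single Cauchy--Schwarz with the asymmetric factorization $\lambda_i^2=\lambda_i^{3/2}\lambda_i^{1/2}$, which after squaring and inserting $\sum_i\lambda_i=1$ gives the claim in one line. The two arguments rest on the same underlying interpolation inequality $(\sum_i|\lambda_i|^2)^2\leq(\sum_i|\lambda_i|)(\sum_i|\lambda_i|^3)$ --- indeed, running your computation on $|\lambda_i|$ reproduces the paper's Lemma~\ref{lem:p3-aux} verbatim --- so what the paper's longer detour buys is chiefly the standalone norm lemma for arbitrary Hermitian matrices, which it uses to emphasize that the bound does not presuppose a nonnegative spectrum. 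Your observation about the equality case (saturation exactly for flat nonzero spectra) is a small addition not recorded in the paper.
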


Note that Eq.~\eqref{eq:p3-condition} resembles the following well-known monotonicity relation among R\'enyi entropies (see e.g., Ref.~\cite{Zyczkowski2003}):
\begin{equation}
	S_{3}(\rho) \leq S_2 (\rho) \; \;\text{for} \; \; S_n (\rho) = \tfrac{1}{1-n}\log_2 \big( \mathrm{tr}(\rho^n) \big).
\end{equation}
However, this relation only applies to density matrices, i.e. positive semidefinite matrices with unit trace. 
The $p_3$-PPT condition, in contrast, is designed to test the absence of positive semidefiniteness. Hence, it is crucial to have a condition that does not break down if the matrix in question has negative eigenvalues. Rel.~\eqref{eq:p3-condition} (and its direct proof provided in the next subsection) do achieve this goal, while an argument based on monotonicity relations between R\'enyi entropies can break down, because the logarithm of non-positive numbers is not properly defined.

\subsection{Proof of the $p_3$-PPT condition}

Let $X$ be a Hermitian $d \times d$ matrix with eigenvalue decomposition $X = \sum_{i=1}^d \lambda_i |x_i \rangle \! \langle x_i|$.
For $p \geq 1$, we
introduce the
Schatten-$p$ norms
\begin{equation*}
	\|X \|_p = \Big( \sum_{i=1}^d |\lambda_i |^p \Big)^{1/p} = \mathrm{Tr}\left( |X|^p  \right)^{1/p},
\end{equation*}
where $|X|=\sqrt{X^2} = \sum_{i=1}^d |\lambda_i | |x_i \rangle \! \langle x_i|$ denotes the (matrix-valued) absolute value.
The Schatten-$p$ norms encompass most widely used matrix norms in quantum information. Concrete examples are the trace norm ($p=1$), the Hilbert-Schmidt/Frobenius norm ($p=2$) and the operator/spectral norm ($p = \infty$). 
Each Schatten-$p$ norm corresponds to the usual vector $\ell_p$-norm of the vector of eigenvalues  $\lambda = (\lambda_1,\ldots,\lambda_d)^T \in \mathbb{R}^d$:
\begin{equation}
	\| \lambda \|_{\ell_p} = \Big(\sum_{i=1}^d |\lambda_i|^p \Big)^{1/p} \quad \text{for $p \geq 1$.}
\end{equation}
Hence, Schatten-$p$ norms inherit many desirable properties from their vector-norm counterparts. Here, we shall use vector norm relations to derive a relation among Schatten-$p$ norms. It is based on Hoelder's inequality that relates the inner product
\begin{equation}
	\langle v,w \rangle = \sum_{i=1}^d v_i w_i \quad \text{for $v,w \in \mathbb{R}^d$}
\end{equation}
to a combination of $\ell_p$ norms.

\begin{fact}[Hoelder's inequality for vector norms]
	Fix $p,q \geq 1$ such that $1/p +1/q =1$. Then,
	\begin{equation}
		\left| \langle v,w \rangle \right|  \leq \sum_{i=1}^d |v_i w_i| \leq \| v\|_{\ell_p} \|w \|_{\ell_q}
	\end{equation}
	for any $v,w \in \mathbb{R}^d$.
\end{fact}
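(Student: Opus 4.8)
The stated Fact consists of two inequalities, and I would treat them separately. The left-hand inequality $|\langle v,w\rangle| \le \sum_{i=1}^d |v_i w_i|$ is nothing more than the triangle inequality applied to the real sum $\langle v,w\rangle = \sum_i v_i w_i$, together with the factorization $|v_i w_i| = |v_i|\,|w_i|$; it needs no hypothesis on $p,q$. The substantive content is the right-hand inequality $\sum_{i=1}^d |v_i w_i| \le \|v\|_{\ell_p}\|w\|_{\ell_q}$, and the plan is to reduce it to a pointwise scalar estimate after a suitable normalization.

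The key scalar lemma is Young's inequality: for any $a,b\ge 0$ and conjugate exponents $p,q>1$ with $1/p+1/q=1$,
\begin{equation*}
	ab \le \frac{a^p}{p} + \frac{b^q}{q}.
\end{equation*}
I would prove this from the concavity of the logarithm (equivalently, convexity of $t \mapsto e^t$): for $a,b>0$, concavity applied to the convex combination with weights $1/p$ and $1/q$ (which sum to one) gives $\log\!\big(\tfrac{1}{p}a^p + \tfrac1q b^q\big) \ge \tfrac1p \log(a^p) + \tfrac1q \log(b^q) = \log a + \log b = \log(ab)$; exponentiating yields the claim, and the case $a=0$ or $b=0$ is immediate.

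With Young's inequality in hand, the remaining steps are bookkeeping. First dispose of the trivial cases: if $\|v\|_{\ell_p}=0$ or $\|w\|_{\ell_q}=0$ then the corresponding vector vanishes and both sides are zero. Otherwise set $\hat v_i = |v_i|/\|v\|_{\ell_p}$ and $\hat w_i = |w_i|/\|w\|_{\ell_q}$, so that $\sum_i \hat v_i^{\,p} = 1$ and $\sum_i \hat w_i^{\,q} = 1$ by the definition of the $\ell_p$- and $\ell_q$-norms. Applying Young's inequality to each product $\hat v_i \hat w_i$ and summing over $i$ gives
\begin{equation*}
	\sum_{i=1}^d \hat v_i \hat w_i \;\le\; \frac1p \sum_{i=1}^d \hat v_i^{\,p} + \frac1q \sum_{i=1}^d \hat w_i^{\,q} = \frac1p + \frac1q = 1.
\end{equation*}
Multiplying through by $\|v\|_{\ell_p}\|w\|_{\ell_q}$ and recalling that $\hat v_i \hat w_i = |v_i w_i|/(\|v\|_{\ell_p}\|w\|_{\ell_q})$ returns the desired bound.

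The only genuine obstacle is establishing Young's inequality; once that scalar estimate is secured, the passage to vectors is routine normalization and summation. I would also remark that the limiting conjugate pair $p=1$, $q=\infty$ (with $\|w\|_{\ell_\infty}=\max_i|w_i|$) is handled directly by $\sum_i |v_i w_i| \le (\max_i|w_i|)\sum_i|v_i|$, but the case relevant to the $p_3$-PPT application, namely $(p,q)=(3/2,3)$, falls under the generic $p,q>1$ argument above.
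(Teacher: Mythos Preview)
Your proof is correct and follows the standard textbook route: establish Young's inequality from concavity of the logarithm, then normalize and sum. There is nothing to criticize in the argument.

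However, there is no comparison to be made with the paper's proof, because the paper does not prove this statement at all. It is labeled a \emph{Fact} precisely because it is quoted as a well-known classical result and taken for granted; the paper immediately moves on to remark that Cauchy--Schwarz is the special case $p=q=2$ and then applies H\"older with $(p,q)=(3,3/2)$ inside the proof of Lemma~1. So your proposal supplies a proof where the paper supplies none. If anything, your remark about the degenerate pair $(p,q)=(1,\infty)$ is more care than the paper exercises --- the paper's stated hypothesis $p,q\ge 1$ with $1/p+1/q=1$ only admits the open range $p,q>1$ anyway, and the application in Lemma~1 uses $(3,3/2)$, which your generic argument covers.
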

The well-known Cauchy-Schwarz inequality is a special case of this fact.
Set $p=q=1/2$ to conclude
\begin{equation}
	\left|\langle v,w \rangle \right|  \leq \|v \|_{\ell_2} \|w \|_{\ell_2}
	= \langle v,v \rangle^{1/2}\langle w,w \rangle^{1/2}.
\end{equation}
At the heart of our proof for the $p_3$-PPT condition is a simple relation between Schatten-$p$ norms of orders $p=1,2,3$. 

\begin{lemma} \label{lem:p3-aux}
	The following norm relation holds for every Hermitian matrix $X$:
	\begin{equation*}
		\|X \|_2^4 \leq \|X \|_1 \|X \|_3^3
	\end{equation*}
\end{lemma}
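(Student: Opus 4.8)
The plan is to reduce this matrix inequality to a scalar inequality on the vector of absolute eigenvalues, and then invoke the Cauchy-Schwarz special case of Hoelder's inequality recalled above. Since $X$ is Hermitian, it has a spectral decomposition with real eigenvalues $\lambda_1,\dots,\lambda_d$, and each Schatten-$p$ norm is exactly the $\ell_p$ norm of the vector $(|\lambda_1|,\dots,|\lambda_d|)$, because $|X|$ is diagonal in the same eigenbasis as $X$ with eigenvalues $|\lambda_i|$. Writing $a_i=|\lambda_i|\geq 0$, the claimed bound $\|X\|_2^4 \leq \|X\|_1 \|X\|_3^3$ is therefore equivalent to the purely scalar statement
\[
	\Big( \sum_{i=1}^d a_i^2 \Big)^2 \leq \Big( \sum_{i=1}^d a_i \Big)\Big( \sum_{i=1}^d a_i^3 \Big).
\]

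The key step is to interpolate the exponent $2$ between $1$ and $3$ via the factorization $a_i^2 = a_i^{1/2}\, a_i^{3/2}$. Accordingly, I would introduce the two nonnegative vectors $v=(a_i^{1/2})_{i=1}^d$ and $w=(a_i^{3/2})_{i=1}^d$, so that
\[
	\langle v,w \rangle = \sum_{i=1}^d a_i^2 = \|X\|_2^2, \qquad \|v\|_{\ell_2}^2 = \sum_{i=1}^d a_i = \|X\|_1, \qquad \|w\|_{\ell_2}^2 = \sum_{i=1}^d a_i^3 = \|X\|_3^3 .
\]
Applying Cauchy-Schwarz, $\langle v,w\rangle \leq \|v\|_{\ell_2}\|w\|_{\ell_2}$, then gives $\|X\|_2^2 \leq \|X\|_1^{1/2}\|X\|_3^{3/2}$, and squaring both sides yields the assertion.

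There is essentially no hard step here: once the factorization $a_i^2=a_i^{1/2}a_i^{3/2}$ is spotted, the result is immediate from Cauchy-Schwarz, so the main ``obstacle'' is simply identifying the right pair of vectors. The only point worth stressing is that the argument uses nothing about the \emph{sign} of the eigenvalues --- it passes through $|X|$ and holds for \emph{any} Hermitian $X$. This is precisely what is needed, since the lemma is ultimately applied to $X=\rho_{AB}^{T_A}$, which may fail to be positive semidefinite; an approach routed through entropy monotonicity (taking logarithms of traces) would break down exactly in that regime.
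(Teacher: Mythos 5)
Your proof is correct. It differs from the paper's in a small but genuine way: the paper first applies H\"older's inequality with exponents $(p,q)=(3,3/2)$ to $\langle\lambda,\lambda\rangle=\mathrm{Tr}(X^2)$, obtaining $\|X\|_2^2\leq\|X\|_3\,\|\lambda\|_{\ell_{3/2}}$, and then controls the leftover $\ell_{3/2}$-norm by a second application of Cauchy--Schwarz via $|\lambda_i|^{3/2}=|\lambda_i|\cdot|\lambda_i|^{1/2}$, finally combining and cancelling a factor of $\|X\|_2^{2/3}$. You instead collapse the whole argument into a single Cauchy--Schwarz step by splitting $a_i^2=a_i^{1/2}a_i^{3/2}$, which gives $\|X\|_2^2\leq\|X\|_1^{1/2}\|X\|_3^{3/2}$ directly --- this is just the log-convexity (Lyapunov interpolation) of the moments $\sum_i a_i^p$ at $p=2$ between $p=1$ and $p=3$. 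Both routes reduce to scalar inequalities on the vector of absolute eigenvalues, so nothing conceptual separates them, but yours is shorter, avoids fractional H\"older exponents entirely, and needs no cancellation step at the end. Your closing remark --- that the argument passes through $|X|$ and therefore survives when $X$ has negative eigenvalues, which is exactly the regime where the lemma is applied to $\rho_{AB}^{T_A}$ --- matches the point the paper itself emphasizes when contrasting the $p_3$-PPT condition with R\'enyi-entropy monotonicity.
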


\begin{proof}
	Let $\lambda =(\lambda_1,\ldots,\lambda_d)^T$ be the $d$-dimensional vector of eigenvalues of $X$.
	Apply Hoelder's inequality with $p=3,q=3/2$ to the inner product of this vector of eigenvalues with itself:
	\begin{equation}
		\mathrm{Tr}(X^2) = \langle \lambda, \lambda \rangle \leq \| \lambda \|_{\ell_3} \| \lambda\|_{\ell_{2/3}} = \|X \|_3 \| \lambda \|_{\ell_{3/2}}. \label{eq:p3-aux}
	\end{equation}
	Next, we apply Cauchy-Schwarz to the remaining $\ell_{3/2}$-norm:
	\begin{align*}
		\|\lambda \|_{\ell_{3/2}} =& \big( \sum_{i=1}^d |\lambda_i|^{3/2}\big)^{2/3}= \big( \sum_{i=1}^d |\lambda_i| | \lambda_i|^{1/2}\big)^{2/3} \\
		\leq &\Big( \big( \sum_{i=1}^d |\lambda_i|^2 \big)^{1/2} \big( \sum_{i=1}^d |\lambda_i|^{2/2}\big)^{1/2} \Big)^{2/3} \\
		=& \| \lambda \|_{\ell_2}^{2/3} \| \lambda \|_{\ell_1}^{1/3} 
		= \|X \|_2^{2/3} \|X \|_1^{1/3}.
	\end{align*}
	Inserting this relation into Eq.~\eqref{eq:p3-aux} reveals
	\begin{equation*}
		\|X \|_2^2 \leq \|X \|_2^{2/3} \|X \|_1^{1/3} \|X \|_3
	\end{equation*}
	which is equivalent to the claim (take the 3rd power and divide by $\|X\|_2^2$). 
\end{proof}

Proposition~\ref{prop:p3} is an immediate consequence of Lemma~\ref{lem:p3-aux} and elementary properties of positive semidefinite matrices. Recall that a Hermitian $d \times d$ matrix is positive semidefinite (psd) if every eigenvalue is nonnegative. This in turn ensures $|X|=X$ and, by extension, $\|X\|_p = \mathrm{Tr}(X^p)^{1/p}$ for all $p \geq 1$.

\subsection{Discussion and potential generalizations}
The $p_3$-PPT condition tests the absence of positive semidefiniteness based on moments $\mathrm{Tr}(X^p)$ of order $p=1,2,3$. It is natural to wonder whether higher order moments allow the construction of more refined tests. It is possible to show that every positive semidefinite matrix $X$ with unit trace must obey
\begin{align}
	\mathrm{tr}(X^{p-1})^{p-1} \leq \mathrm{tr} (X^p)^{p-2}\quad \text{for all $p >2$.}
	\label{eq:generalized-condition}
\end{align}
As this is a direct extension of the $p_3$-PPT condition ($p=3$), we omit the proof.
Unfortunately, we found numerically that these direct extensions actually produce \emph{weaker} 
tests for the absence of positive semidefiniteness, i.e.\ there exist matrices $X$ that violate the $p_3$-PPT condition but satisfy Rel.~\eqref{eq:generalized-condition} for higher moments $p \geq 4$. This is not completely surprising, since Rel.~\eqref{eq:generalized-condition} compares (powers of) neighboring matrix moments with order $(p-1)$ and $p$. As $p$ increases, these matrix moments 
suppress contributions of small eigenvalues ever more strongly.  In the case of partially transposed quantum states,  the eigenvalues  are required to sum up to one and must be contained in the interval $[-1/2,1]$ \cite{Rana2013}. Thus, the negative eigenvalues can never dominate the spectrum and 
high matrix moment tests for the existence of negative eigenvalues suffer from suppression effects. 

This observation suggests that powerful tests for negative eigenvalues should involve \emph{all} matrix moments $\mathrm{tr}(X^p)$ up to a certain order $p_{\max}$. 
It is useful to change perspective in order to reason about potential improvments. The $p_3$-PPT condition checks whether the following inequality is true:
\begin{equation}
	F_{3}(X) = -\mathrm{tr}(X^3) + \mathrm{tr}(X^2)^2>0.
	\label{eq:p3-polynomial}
\end{equation}
For matrices $X$ with unit trace, 
we can reinterpret the matrix-valued function $F_{3}(X)$ as a sum of (identical) degree-3 polynomials applied to all eigenvalues $\lambda_1,\ldots,\lambda_d$ of $X$. Set $p_2=\mathrm{tr}(X^2)$ and use $\mathrm{tr}(X)=\sum_{i=1}^d \lambda_i =1$ to conclude
\begin{align}
	F_3 (X)=& - \mathrm{tr}(X^3) + 2p_2  \mathrm{tr}(X^2)- p_2^2 \mathrm{tr}(X) \nonumber\\
	=& \sum_{i=1}^d \left( -\lambda_i^3 + 2 p_2 \lambda_i^2 - p_2^2 \lambda_i \right) \nonumber \\
	=& \sum_{i=1}^d - \lambda_i (\lambda_i -p_2)^2
	=: \sum_{i=1}^d f_3 (\lambda_i).
\end{align}
Note that the polynomial 
\begin{equation}
	f_3(x) = -x (x-p_2)^2 \quad \text{for} \quad x \in \mathbb{R} \label{eq:f3}
\end{equation}
depends on $p_2$ and, by extension, also on the matrix $X$. 
We will come back to this aspect later. For now, we point out that -- regardless of the actual value of $p_2$ -- this polynomial has three interesting properties:
\begin{align}
	\begin{array}{cc}
		f_3 (x) \leq 0 
		& \quad \text{if $x >0$}, \\
		f_3 (0) = 0, & \\
		f_3 (x) >0& \quad \text{if $x<0$}.
	\end{array} \label{eq:constraints}
\end{align}
These properties reflect the behavior of another well-known function -- the \emph{(negated) rectifier  function} (ReLU):
\begin{equation}
	r(-x) = \max \left\{0,-x \right\}=
	\begin{cases} 
		0 & \text{if $x  \geq 0$}, \\
		|x| & \text{if $x <0$}.
	\end{cases}
\end{equation}
See Figure~\ref{fig:ReLU} for a visual comparison.
Applying the (negated) rectifier function to the eigenvalues of $X$ would recover the negativity:
\begin{equation}
	\mathcal{N}(X) = \sum_{ \lambda_i <0} |\lambda_i| 
	= \sum_{i=1}^d r(-\lambda_i).
\end{equation}
Hence, it is instructive to interpret $F_3(X)$ as a polynomial approximation to the (non-analytic) negativity function.

\begin{figure}
	\centering
	\includegraphics[width=0.9\linewidth]{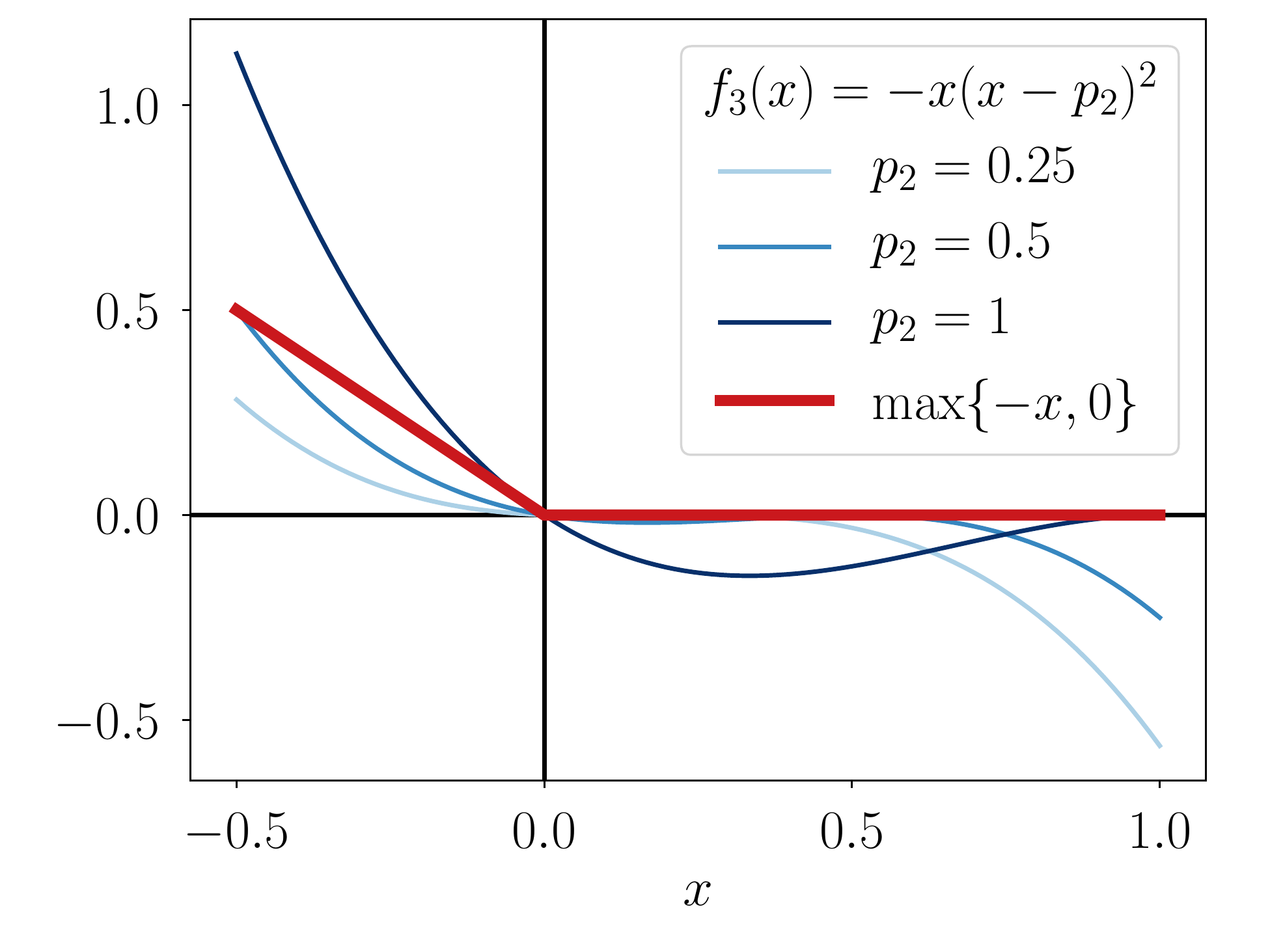}
	\caption{Comparison of $f_3 (x) = -x (x-p_2)^2$ with the negated rectifier function $r(-x)=\max \left\{-x,0\right\}$ for different values of $p_2$ in the relevant interval $[-1/2,1]$ \cite{Rana2013}.}
	\label{fig:ReLU}
\end{figure}

On the level of polynomials, the condition $f_3 (x) \leq 0$ whenever $x>0$ is most important. It implies that positive eigenvalues of $X$ can never increase the value of $F_3 (X)=\sum_{i=1}^d f_3 (\lambda_i)$. In particular, $F_3 (X) \leq 0$ whenever $X$ is positive semidefinite -- as stated in Proposition~\ref{prop:p3}. The $p_3$-PPT condition is sound, i.e.\ it has no false positives. 

Conversely, $f_3(x) >0$ for $x<0$ implies that $F_3 (X)$ can become positive if $X$ has negative eigenvalues. Hence, the $p_3$-PPT condition is not vacuous. It is capable of detecting negative eigenvalues in many, but not all, unit-trace matrices $X$. 

Let us now return to the (matrix-dependent) parameter choice in Eq.~\eqref{eq:f3}. In principle, every polynomial of the form $f_3^{(a)}(x)=-x(x-a)^2$ with $a \in \mathbb{R}$ obeys the important structure constraints \eqref{eq:constraints} and therefore produces a sound test for negative eigenvalues. For fixed $X$, the associated matrix polynomial evaluates to
\begin{equation}
	F_3^{(a)} (X) 
	= - \mathrm{tr}(X^3) + 2 a \mathrm{tr}(X^2)-a^2 \mathrm{tr}(X).
\end{equation}
We can optimize this expression over the parameter $a \in \mathbb{R}$ to make the test as strong as possible. 
The optimal choice is $a_\sharp = \mathrm{tr}(X^2)/\mathrm{tr}(X)$ and produces a matrix polynomial that obeys $F^{(a^\sharp)}_3(X)\geq \max_{a \in \mathbb{R}} F_3^{(a)} (X)$ for $X$ fixed. If $X$ has also unit trace, the optimal parameter becomes $a_\sharp = p_2$ and produces the $p_3$-PPT condition \eqref{eq:p3-polynomial}.

This construction of PPT conditions readily extends to higher order polynomials $f_p(x)=a_p x^p + \cdots a_1 x +a_0$. Increasing the degree $p$ produces more expressive ansatz functions that can approximate the (negated) rectifier function -- and its core properties -- ever more accurately. Viewed from this angle, it becomes apparent that measuring more matrix moments can produce stronger tests for detecting negative eigenvalues. However, it is not so obvious how to choose the parameters $a_p,\ldots,a_0$ ``optimally'',  or what ``optimally'' actually means in this context. Some well-known polynomial approximations of the rectifier function $r(-x)$ -- like Taylor expansions of $s(-x)=\ln (1+\mathrm{e}^{-x})$ (the ``softplus'' function) -- are not well-suited for this task, because $s(-x)>0$ even for $x>0$. This in turn would imply that the associated test condition may not be sound. We believe that a thorough analysis of these questions is timely and interesting, but would go beyond the scope of this work. We intend to address it in future research.

\section{$p_3$-PPT condition for Werner States}

Werner states are bipartite quantum states in a Hilbert space $\mathcal{H}_{AB}=\mathcal{H}_A \otimes \mathcal{H}_B$ with dimensions $d_A=d_B \equiv d$, defined as 
\begin{align}
	\rho_W= \alpha \binom{d+1}{2}^{-1} \Pi_+ + (1-\alpha) \binom{d}{2}^{-1}\Pi_-
	\label{eq:werner}
\end{align}
with parameter $\alpha \in [0,1]$ and $\Pi_\pm=\frac{1}{2}\left( \mathbb{I}\pm \Pi_{12} \right)$ projectors onto symmetric $\mathcal{H}_+$ and anti-symmetric $\mathcal{H}_-$ subspaces of $\mathcal{H}=\mathcal{H}_+ \oplus \mathcal{H}_-$, respectively \cite{Watrous2018}. Here, $\Pi_{12}=\sum_{i,j=1}^d\ket{i}\bra{j} \otimes \ket{j}\bra{i}$ is the swap operator.  We note that the eigenvalues of $\rho_W$ are thus given as $\lambda_+=\alpha  \binom{d+1}{2}^{-1} $ with multiplicity $ \binom{d+1}{2}$ and $\lambda_-=(1-\alpha)\binom{d}{2}^{-1}$ with multiplicity $\binom{d}{2}$. The reduced state $\rho_A$ of qudit $A$  is given by $\rho_A =\tr_B[\rho_W]=\mathbb{I}_A/d$.

Using furthermore that $\Pi^{T_A}_{\pm}= 1/2(\Delta_1 \pm (d\pm 1) \Delta_0)$ with $\Delta_0=\ket{\phi_+}\bra{\phi_+}$ being a projector onto the maximally entangled state and $\Delta_1=\mathbb{I}-\Delta_0$ \cite{Watrous2018}, we find
\begin{align}
	\rho^{T_A}_W= \frac{2\alpha-1 }{d}  \Delta_0 + \frac{1 +d -2\alpha}{d } \frac{\Delta_1}{d^2-1 }
\end{align}
with eigenvalues  $\lambda_0=(2\alpha-1)/{d}$ with multiplicity 1 and  $\lambda_1=({1 +d -2\alpha})/{d (d^2-1) }$ with multiplicity $d^2-1$.

We note that, for any $d$, $\lambda_0<0$ for  $0\leq \alpha < 1/2 $. Thus, using the PPT condition, we find that $\rho_W$ is entangled for $0\leq \alpha < 1/2$. Using the explicit expression of the eigenvalues, we can furthermore determine $\tr\left[ ( \rho^{T_A}) ^{n}\right]$ for any $n$. We find for all local dimensions  $d$
\begin{align}
	\tr\left[ ( \rho^{T_A}) ^2\right]^2 >  \tr\left[ ( \rho^{T_A}) ^3\right]\quad \text{for} \quad
	0\leq \alpha < \frac{1}{2}  
\end{align}
Thus, for Werner states the $p_3$-PPT condition is equivalent to the full PPT condition. It can be furthermore shown that Werner states are separable for $\alpha\geq 1/2$ \cite{Watrous2018}. Thus, for Werner states, the $p_3$-PPT condition is a necessary and sufficient condition for bipartite entanglement.
This also holds true for ``isotropic'' states of the form $\rho = \alpha \mathbbm{1}/d^2 + (1-\alpha) |\phi_+\rangle\langle \phi_+|$, which are closely related.

We 
note that Werner states can have non-positive PT-moments. For local dimension $d>3$ there exists a parameter interval $\left[0,\alpha^*\right)$ such that the associated Werner state \eqref{eq:werner} obeys $p_3=\mathrm{Tr} \big[ (\rho_W^{T_A})^3 \big]<0$ for all $\alpha \in [0,\alpha^*)$.
This 
highlights that the logarithm of PT-moments, appearing also in the ratio $R_3=-\log_2 (p_3/\tr[\rho^3])$,  need not be properly defined, 
justifying a claim from the previous subsection. It 
is difficult to use entropic arguments for reasoning about relations between (logarithmic) PT-moments.

Finally, as shown in Ref.~\cite{Wybo2020}, we remark that $R_3$ is not an entanglement monotone. For separable Werner states with $1/2\leq \alpha<1/2+1/(2d)$, it holds that $0< p_3<\tr[\rho^3]$.  Thus, $R_3=-\log_2 (p_3/\tr[\rho^3])$ can be greater than zero, even for separable states. Since $R_3$  equals zero for all product states, it is not an entanglement monotone \cite{Horodecki2009}.

\section{Comparison of entanglement conditions for quench dynamics}

In this section, we compare the diagnostic power of the full PPT-condition, the $p_3$-PPT condition and a condition based on purities of nested subsystems  to detect bipartite entanglement of mixed states.  Specifically, given a reduced density matrix $\rho_{AB}$ in a bipartite system $AB$, we consider: 
\begin{enumerate}
	\item the   PPT-condition detecting bipartite entanglement between $A$ and $B$ for a strictly positive negativity $\mathcal{N}(\rho_{AB})=\sum_{\lambda<0}|\lambda|>0$, with $\lambda$ the spectrum of $\rho_{AB}^{T_A}$ \cite{Horodecki2009}. 
	\item the $p_3$-PPT condition  detecting bipartite entanglement between $A$ and $B$ for  $1-p_3/p_2^2>0$.
	\item a condition based on the purity of nested subsystems detecting bipartite entanglement between $A$ and $B$ for $\tr[\rho_A^2]<\tr[\rho_{AB}^2]$ with $
	\rho_A=\tr_B[\rho_{AB}]$ the reduced density matrix of subsystem $A$ \cite{Horodecki2009}.
\end{enumerate}
The latter 'purity' condition was used in previous experimental works measuring the second R\'{e}nyi entropy \cite{Islam2015,Kaufman2016,Linke2018, Brydges2019} to reveal bipartite entanglement of weakly mixed states.  

To test these conditions, we consider here, as an example, quantum states generated via quench dynamics in interacting spin models. Specifically, we study quenches in  the $XY$-model with long-range interactions, as defined in Eq.~(6) of the main text, in a total system with $N=10$ spins. The initial separable product state is a N\'{e}el state $\ket{\uparrow \downarrow \uparrow \downarrow \dots}$.  

As shown in Fig.~\ref{fig:EntCritExp}, the negativity (red lines) detects bipartite entanglement for all partitions sizes and all times after the quench.  The $p_3$-PPT condition (blue lines) performs similar for the partitions considered in panel (b) and (c) and is thus able to detect bipartite entanglement for highly mixed states $\rho_{AB}$ whose purity decreases to $~0.3$ for the panel (b) at late times. The $p_3$-PPT conditions fails however to detect the entanglement for the close-to completely mixed states of small partitions $|AB|=4$ at late times, displayed in panel (a). This can be attributed to the fact that the $p_3$-PPT condition only relies on low order PT-moments.
The purity condition (green lines) is only useful for the detection of entanglement  for large partitions $AB$ with $|AB|=8$ (panel (c)). These remain  weakly mixed during the entire time evolution, since the total system of $N=10$ spins is described here by a pure state.

\begin{figure}
	\centering
	\includegraphics[width=\linewidth]{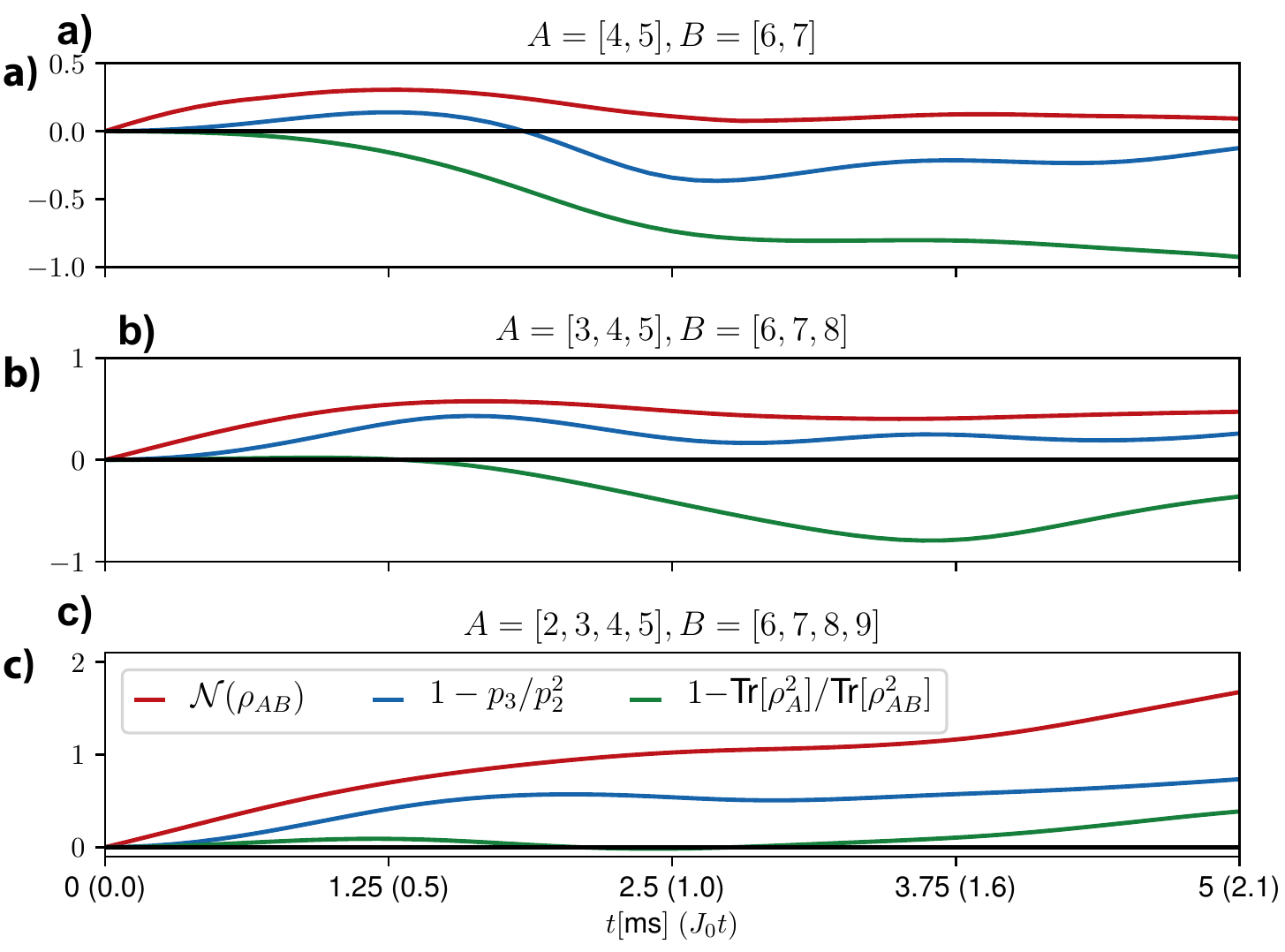}
	\caption{Comparing conditions for bipartite entanglement between two subsystems $A$ and $B$ for states generated with quench dynamics governed by $H_{XY}$ arising from an initial N\'{e}el state in a total system with $N=10$ spins. Modeling the experiment of Ref.~\cite{Brydges2019}, we chose $J_0= 420 s^{-1},\alpha= 1.24$, while other parameter choices lead to 
		similar results.  In all panels, and for all quantities,  a strictly positive value signals bipartite entanglement.}
	\label{fig:EntCritExp}
\end{figure}

\section{Error bars for PT moment predictions} \label{sec:statistics}

Let us first review the data acquisition procedure. To obtain meaningful information about an $N$-qubit state $\rho$, we first perform a collection of random single qubit rotations: $\rho \mapsto u \rho u^\dagger$, where $u=u_{1} \otimes \cdots \otimes u_{N}$ and each $u_{i}$ is chosen from a unitary 3-design.
Subsequently, we perform computational basis measurements and store the outcome:
\begin{align}
	\rho \mapsto u \rho u^\dagger \mapsto |k_1, \dots, k_N \rangle.
\end{align}
Here, $k_1,\ldots,k_N \in \left\{0,1\right\}$ denote the measurement outcomes on qubits $1, \dots, N$. 
As shown in \cite{Elben2018a,Paini2019,Huang2020}, the outcome of this measurement provides a (single-shot) estimate for the unknown state:
\begin{equation}
	\hat{\rho}=\bigotimes_{i =1}^N \left[ 3 (u_{i}^{})^\dag \ket{k_{i}^{}}\bra{k_{i}^{}}u_{i}^{}-\mathbb{I}_2\right] \label{eq:shadow}
\end{equation}
This tensor product is a random matrix -- the unitaries $u^{(i)}$, as well as the observed outcomes $k_i$ are random -- that produces the true underlying state in expectation:
\begin{equation}
	\mathbb{E} \left[ \hat{\rho} \right] = \rho.
\end{equation}
Thus, the result of a (randomly selected) single-shot measurement provides a classical snapshot \eqref{eq:shadow} that reproduces the true underlying state in expectation. This desirable feature extends to density matrices of subsystems. Let $AB \subset \left\{1,\ldots,N \right\}$ be a subset of $|AB|\leq N$ qubits and let $\rho_{AB} = \mathrm{Tr}_{\neg AB} (\rho)$ the associated reduced density matrix. Then,
\begin{align}
	\hat{\rho}_{AB} = \mathrm{Tr}_{\neg AB} (\hat{\rho}) =& \bigotimes_{i \in AB} \left[ 3 (u_{i}^{})^\dag \ket{k_{i}^{}}\bra{k_{i}^{}}u_{i}^{}-\mathbb{I}_2\right] \label{eq:subsystem-shadow} \\
	\text{obeys} & \quad \mathbb{E} \left[ \hat{\rho}_{AB} \right] = \rho_{AB}. \nonumber
\end{align}
This feature can be used to estimate linear properties of the subsystem in question: $o = \mathrm{Tr} \left( O \rho_{AB} \right)$. 
Perform $M$ independent repetitions of the data acquisition procedure and use them to create a collection of (independent) snapshots $\hat{\rho}_{AB}^{(1)},\ldots,\hat{\rho}_{AB}^{(M)}$ -- a ``classical shadow'' \cite{Huang2020} --
and form the empirical average of  subsystem properties:
\begin{equation}
	\hat{o} = \tfrac{1}{M} \sum_{r=1}^M \mathrm{Tr} \left( O \hat{\rho}_{AB}^{(r)} \right).
\end{equation}
Convergence to the target value $o = \mathbb{E} \left[ \hat{o} \right]=\mathrm{Tr}(O \rho_{AB})$ is controlled by the variance. Chebyshev's inequality asserts
\begin{equation}
	\mathrm{Pr} \left[ \left| \hat{o} -o \right| \geq \epsilon \right] \leq \frac{\mathrm{Var} \left[ \hat{o} \right]}{\epsilon^2}
	= \frac{\mathrm{Var} \left[\mathrm{Tr} ( O_{AB} \hat{\rho}_{AB}) \right]}{M \epsilon^2} .\label{eq:chebyshev}
\end{equation}

The remaining (single-shot) variance obeys the following useful relation.

\begin{fact}[Proposition~3 in \cite{Huang2020}] \label{fact:variance}
	Fix a subsystem $AB$ and a linear function $\mathrm{Tr}(O \rho_{AB})$. Then, the single-shot variance associated with $\hat{\rho}_{AB}$ defined in Eq.~\eqref{eq:subsystem-shadow} obeys
	\begin{equation}
		\mathrm{Var} \left[ \mathrm{Tr} \left( O \hat{\rho}_{AB} \right) \right] \leq 2^{|AB|} \mathrm{Tr} \left( O^2 \right). \label{eq:variance}
	\end{equation}
\end{fact}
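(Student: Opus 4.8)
The plan is to control the variance through the \emph{second moment} of the single-shot estimator and to evaluate that second moment using the \emph{third moment} of the measurement ensemble. Since $\mathrm{Var}[\mathrm{Tr}(O\hat{\rho}_{AB})]=\mathbb{E}[\mathrm{Tr}(O\hat{\rho}_{AB})^2]-\mathrm{Tr}(O\rho_{AB})^2$ and the subtracted term is nonnegative, it suffices to establish $\mathbb{E}[\mathrm{Tr}(O\hat{\rho}_{AB})^2]\leq 2^{|AB|}\mathrm{Tr}(O^2)$. First I would record that the single-qubit snapshot in Eq.~\eqref{eq:subsystem-shadow} inverts the local measurement channel $\mathcal{M}(X)=\tfrac13\big(X+\mathrm{Tr}(X)\mathbb{I}_2\big)$, i.e.\ $3u^\dagger|k\rangle\langle k|u-\mathbb{I}_2=\mathcal{M}^{-1}(u^\dagger|k\rangle\langle k|u)$, so that with the global channel $\mathcal{M}_{AB}=\mathcal{M}^{\otimes|AB|}$ and $\widetilde{O}=\mathcal{M}_{AB}^{-1}(O)$ one has $\mathrm{Tr}(O\hat{\rho}_{AB})=\langle\mathbf{k}|u\widetilde{O}u^\dagger|\mathbf{k}\rangle$ (using that $\mathcal{M}$ is self-adjoint). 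Squaring and averaging over unitaries and Born-distributed outcomes then gives
\[
\mathbb{E}\big[\mathrm{Tr}(O\hat{\rho}_{AB})^2\big]=\mathrm{Tr}\big[(\rho_{AB}\otimes\widetilde{O}\otimes\widetilde{O})\,\mathcal{E}_3\big],\qquad \mathcal{E}_3=\mathbb{E}_u\textstyle\sum_{\mathbf{k}}\big(u^\dagger|\mathbf{k}\rangle\langle\mathbf{k}|u\big)^{\otimes 3}.
\]
The three copies of the snapshot are exactly why a unitary $3$-design (rather than a $2$-design) is required.

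Because the ensemble is a product over qubits, $\mathcal{E}_3$ factorizes as $\mathcal{E}_3=\bigotimes_{i\in AB}\mathcal{E}_3^{(i)}$. For a single qubit, Schur--Weyl duality (or a direct average over the $3$-design) evaluates the moment as a scaled projector onto the symmetric subspace of the three copies: since $\dim\mathrm{Sym}^3(\mathbb{C}^2)=4$ and there are two outcomes, $\mathcal{E}_3^{(i)}=\mathbb{E}_u\sum_{k=0,1}(u^\dagger|k\rangle\langle k|u)^{\otimes 3}=\tfrac12 P^{(3)}_{\mathrm{sym},i}$. Hence $\mathcal{E}_3=2^{-|AB|}\bigotimes_{i}P^{(3)}_{\mathrm{sym},i}$, a positive operator equal to $2^{-|AB|}$ times a product of commuting projectors.

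It then remains to show $2^{-|AB|}\mathrm{Tr}\big[(\rho_{AB}\otimes\widetilde{O}^{\otimes 2})\bigotimes_i P^{(3)}_{\mathrm{sym},i}\big]\leq 2^{|AB|}\mathrm{Tr}(O^2)$. I would expand each projector as $\tfrac16\sum_{\pi\in S_3}W_\pi$ and reduce the trace to a sum of replica contractions; equivalently, trace out the two observable copies to obtain an operator $R$ on the single $AB$-register and bound $\mathrm{Tr}(\rho_{AB}R)$ by the largest eigenvalue of its Hermitian part. The main obstacle is precisely this final estimate: $\rho_{AB}\otimes\widetilde{O}^{\otimes 2}$ is \emph{not} positive semidefinite (the eigenvalues of $\widetilde{O}^{\otimes 2}$ come in products $\lambda_i\lambda_j$ of possibly opposite sign), so one cannot merely use $P^{(3)}_{\mathrm{sym}}\preceq\mathbb{I}$. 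The delicate point that produces the constant $2^{|AB|}$ rather than a weaker $3^{|AB|}$ is that the normalization of $\rho_{AB}$ must be retained throughout: replacing the Born weights $\langle\mathbf{k}|u\rho_{AB}u^\dagger|\mathbf{k}\rangle$ by the trivial bound $1$ overcounts by the number of outcomes and inflates the $3^{\mathrm{weight}}$ amplification that $\mathcal{M}_{AB}^{-1}$ applies to high-weight Pauli components of $O$. I would therefore carry out the Pauli-weight bookkeeping of $\widetilde{O}=\mathcal{M}_{AB}^{-1}(O)$ (each nonidentity single-qubit factor is multiplied by $3$) against the per-qubit contractions generated by the symmetric projectors and verify that the weighted sum collapses to $2^{|AB|}\mathrm{Tr}(O^2)$; this is also the step where the equivalent \emph{shadow-norm} formulation of Ref.~\cite{Huang2020} can be invoked to organize the estimate cleanly.
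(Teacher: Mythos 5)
Note first that the paper does not prove this statement at all: it is imported verbatim as ``Proposition~3 in \cite{Huang2020}'' and used as a black box, so there is no in-paper argument to compare against. You are therefore effectively re-deriving the cited result, and your derivation should be judged as a self-contained proof.

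Your setup is correct and follows the standard route of \cite{Huang2020}: bounding the variance by the second moment, writing the snapshot as $\mathcal{M}_{AB}^{-1}$ applied to the measured projector so that $\mathrm{Tr}(O\hat{\rho}_{AB})=\langle\mathbf{k}|u\widetilde{O}u^{\dagger}|\mathbf{k}\rangle$, expressing the second moment as a contraction of $\rho_{AB}\otimes\widetilde{O}\otimes\widetilde{O}$ against the three-fold moment operator, factorizing over qubits, and evaluating the single-qubit third moment as $\tfrac{1}{2}P^{(3)}_{\mathrm{sym}}$ (which is also why a $3$-design is needed). A quick single-qubit sanity check ($O=\sigma_z$ gives second moment $3\leq 2\,\mathrm{Tr}(\sigma_z^2)=4$) confirms the bookkeeping up to that point. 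However, the proof stops exactly where the content of the statement lives: the inequality $2^{-|AB|}\,\mathrm{Tr}\big[(\rho_{AB}\otimes\widetilde{O}^{\otimes 2})\bigotimes_i P^{(3)}_{\mathrm{sym},i}\big]\leq 2^{|AB|}\mathrm{Tr}(O^2)$ is only announced (``I would expand\ldots I would carry out the Pauli-weight bookkeeping\ldots and verify''), not established. You yourself flag this as ``the main obstacle,'' and it is genuinely nontrivial: one must expand $O$ in the Pauli basis, show that the per-qubit contractions kill all cross terms except those where the two Pauli strings agree or one is the identity on each site, and then run the domination argument that converts the $3^{|\mathbf{b}|}$ amplification from $\mathcal{M}_{AB}^{-1}$ into the final prefactor $2^{|AB|}$ in Eq.~\eqref{eq:variance} rather than something like $3^{|AB|}$. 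Without that computation the claimed constant is unverified, so as written the proposal is a correct plan with a gap at the decisive step; completing it amounts to reproducing the proof of Proposition~3 (and the accompanying shadow-norm lemma) of \cite{Huang2020}.
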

This inequality is true for any underlying state $\rho$ and bounds the variance in terms of the subsystem dimension $d_A = 2^{|AB|}$ and the Hilbert-Schmidt norm (squared) of the observable $O$. 
Thus, roughly $M \approx 2^{|AB|} \mathrm{Tr}(O^2)/\epsilon^2$ measurement repetitions are necessary to predict $o$ up to accuracy $\epsilon$.

\subsection{Predicting quadratic properties ($p_2$)} \label{sub:quadratic}

The formalism introduced above readily extends to predictions of higher order polynomials. The special case of quadratic functions has already been addressed in Refs.~\cite{VanEnk2012,Elben2018,Brydges2019,Elben2018a}, and Ref.~\cite{Huang2020} (for the present formalism). The key idea is to represent a quadratic function in $\rho$ as a linear function on the tensor product $\rho \otimes \rho$:
\begin{equation}
	o = \mathrm{Tr} \left(O \rho_{AB} \otimes \rho_{AB} \right).
\end{equation}
This function can be approximated by replacing $\rho \otimes \rho$
with a \emph{symmetric} tensor product of two distinct snapshots $\hat{\rho}^{(i)},\hat{\rho}^{(j)}$ ($i \neq j$):
\begin{align}
	\tfrac{1}{2!} &\sum_{\pi \in \mathcal{S}_2} \hat{\rho}_{AB}^{(\pi (i))} \otimes \hat{\rho}_{AB}^{(\pi (j))} \nonumber \\
	&= \tfrac{1}{2} \left( \hat{\rho}_{AB}^{(i)} \otimes \hat{\rho}_{AB}^{(j)} + \hat{\rho}_{AB}^{(j)}\otimes \hat{\rho}_{AB}^{(i)} \right).
\end{align}
There are $\binom{M}{2}$ different ways of combining a collection of $M$ snapshots $\hat{\rho}^{(1)},\ldots,\hat{\rho}^{(M)}$ in this fashion. We can predict $o = \mathrm{Tr} (O \rho_{AB} \otimes \rho_{AB})$ by forming the empirical average over all of them:
\begin{align}
	\hat{o} =& \binom{M}{2}^{-1} \sum_{i < j} \mathrm{Tr} \left( O \tfrac{1}{2!} \sum_{\pi \in \mathcal{S}_2} \hat{\rho}_{AB}^{(\pi (i))} \otimes \hat{\rho}_{AB}^{(\pi (j))} \right) \nonumber \\
	=& \binom{M}{2}^{-1} \sum_{i <j} \mathrm{Tr} \left( O_{(s)} \hat{\rho}_{AB}^{(i)} \otimes \hat{\rho}_{AB}^{(j)} \right).
\end{align}
Here, we have implicitly defined the \emph{symmetrization} $O_{(s)}$ of the original target function $O$.
This ansatz is a special case of Hoeffding's U-statistics estimator \cite{Hoeffding1992}.
Averaging boosts convergence to the desired expectation $\mathbb{E} \left[ \hat{o} \right] = o$ and the speed of convergence is controlled by the variance \eqref{eq:chebyshev}.

Restriction to subsystems is also possible. Suppose that $O$ only acts nontrivially on  a subsystem $AB$ of both state copies. Then,
\begin{equation}
	\hat{o} = \binom{M}{2}^{-1} \sum_{i<j} \mathrm{Tr} \left( O_{(s)} \hat{\rho}_{AB}^{(i)} \otimes \hat{\rho}_{AB}^{(j)} \right) \label{eq:quadratic-estimator}
\end{equation}
and the effective problem dimension becomes $d_{AB}^2 = 4^{|AB|}$. 
The tensor product structure \eqref{eq:shadow} of the individual snapshots allows for generalizing linear variance bounds to this setting. Simply view $\hat{\rho}^{(i)}_{AB} \otimes \hat{\rho}^{(j)}_{AB}$ as a single snapshot of the quantum state $\rho_{AB} \otimes \rho_{AB}$. Fact~\ref{fact:variance} then ensures
\begin{align}
	\mathrm{Var} \left[ \mathrm{Tr} \left( O_{(s)} \hat{\rho}_{AB}^{(i)} \otimes \hat{\rho}_{AB}^{(j)} \right) \right]
	\leq 4^{|AB|} \mathrm{Tr} \left( O_{(s)}^2 \right).
	\label{eq:quadratic-variance}
\end{align}
The full variance of $\hat{o}$ is controlled in part by this relation, but also features linear variance terms \cite[App.~6.A]{Huang2020}. Rather than reviewing this argument in full generality, let us focus on the task at hand: computing the variance associated with predicting the PT-moment of order two.
Fix a bipartite subsystem $AB$ of interest and rewrite $p_2$ as
\begin{align}
	p_2 &= \mathrm{Tr} \left( (\rho_{AB}^{(T_A)})^2 \right) = \mathrm{Tr} \left( \rho_{AB}^2 \right) \nonumber \\
	&= \mathrm{Tr} \left( \Pi_{AB} \rho_{AB} \otimes \rho_{AB} \right).
\end{align}
Here, $\Pi_{AB}$ denotes the swap operator that permutes the entire subsystems $AB$ within two copies of the global system.
We refer to Table~\ref{tab:diagrams} below for a visual derivation of this well-known relation.
The swap operator is symmetric under permuting tensor factors, Hermitian ($\Pi^\dagger_{AB} = \Pi_{AB}$) and orthogonal ($\Pi^2_{AB} = \mathbb{I}_{AB}$).
These properties ensure that the associated general estimator \eqref{eq:quadratic-estimator} can be simplified considerably:
\begin{align}
	\hat{p}_2 =& \binom{M}{2}^{-1} \sum_{i<j} \mathrm{Tr} \left( \Pi_{AB} \hat{\rho}^{(i)}_{AB} \otimes \hat{\rho}^{(j)}_{AB} \right) \nonumber \\
	=& \binom{M}{2}^{-1} \sum_{i<j} \mathrm{Tr} \left( \hat{\rho}^{(i)}_{AB} \hat{\rho}^{(j)}_{AB} \right). \label{eq:p2-estimator}
\end{align}
By construction, $\mathbb{E}\left[ \hat{p}_2 \right] = p_2 = \mathrm{Tr}(\rho^2)$ and the speed of convergence is controlled by the variance. This variance decomposes into a linear and a quadratic part.
We expand the definition of the variance:
\begin{align}
	&  \mathrm{Var} \left[ \hat{p}_2 \right] 
	= \mathbb{E} \left[ (\hat{p}_2 - \mathbb{E}[\hat{p}_2 ] )^2 \right]
	= \mathbb{E} \left[ \hat{p}_2^2 \right] - \mathbb{E} \left[ \hat{p}_2 \right]^2
	\\
	&= \binom{M}{2}^{-2} \sum_{i<j} \sum_{k<l} \Big( \mathrm{Tr} \big( \hat{\rho}_{AB}^{(i)} \hat{\rho}_{AB}^{(j)} \big) \mathrm{Tr} \big( \hat{\rho}_{AB}^{(k)} \hat{\rho}_{AB}^{(l)} \big) - \mathrm{Tr}(\rho_{AB}^2)^2 \Big). \nonumber
\end{align}
The size and nature of each contribution depends on the relation between the indices $i,j,k,l$ \cite{Hoeffding1992}:
\begin{enumerate}
	\item \emph{all indices are distinct:} distinct indices label independent snapshots. In this case the expectation value factorizes completely and produces 
	$\mathbb{E} \left[ \mathrm{Tr}(\hat{\rho}_{AB}^{(i)} \hat{\rho}_{AB}^{(j)}) \mathrm{Tr}(\hat{\rho}_{AB}^{(k)} \hat{\rho}_{AB}^{(l)} ) \right]=\mathrm{Tr}(\rho_{AB}^2)^2$. This is completely offset by the subtraction of the expectation value squared. Hence, terms where all indices are distinct do not contribute to the variance.
	\item \emph{exactly two indices coincide:}
	In this case, the expectation value partly factorizes, e.g.\ $\mathbb{E} \left[ \mathrm{Tr} ( \hat{\rho}_{AB}^{(i)} \hat{\rho}_{AB}^{(j)} ) \mathrm{Tr} ( \hat{\rho}_{AB}^{(k)} \hat{\rho}_{AB}^{(j)} ) \right]
	= \mathbb{E} \left[ \mathrm{Tr} (\rho_{AB} \hat{\rho}_{AB}^{(j)})^2\right]$ for $i \neq k$ and $j=l$. 
	Such index combinations produce a linear variance term $\mathrm{Var} \left[ \mathrm{Tr} ( O \hat{\rho}) \right]$ with $O=\rho_{AB}$. The entire sum contains $\binom{M}{2} \binom{2}{1}\binom{M-2}{2-1}=\binom{M}{2}2(M-2)$ terms of this form.
	\item \emph{two pairs of indices coincide:} there are $\binom{M}{2}\binom{2}{2}\binom{M-2}{2-2}=\binom{M}{2}$ contributions of this form and each of them produces a quadratic variance $\mathrm{Var} \left[ \mathrm{Tr} \left( O \hat{\rho}_{AB} \otimes \hat{\rho}_{AB}'\right) \right]$ with $O = \Pi_{AB}$ (swap).
\end{enumerate}
We conclude that the variance of $\hat{p}_2$ decomposes into linear and quadratic terms. These can be controlled via Rel.~\eqref{eq:variance} and Rel.~\eqref{eq:quadratic-variance}, respectively:
\begin{align}
	\mathrm{Var} \left[ \hat{p}_2 \right] = & \nonumber  \binom{M}{2}^{-1} 
	\left( \vphantom{\left[\hat{\rho}_{AB}^{(2)}\right]} 2 (M-2) \mathrm{Var} \left[ \mathrm{Tr} (\rho_{AB} \hat{\rho}_{AB}) \right] \right.\nonumber  \\
	&\qquad\qquad \left.+ \mathrm{Var} \left[ \mathrm{Tr} ( \Pi_{AB} \hat{\rho}_{AB}^{(1)} \otimes \hat{\rho}_{AB}^{(2)} )\right] \right) \nonumber \\
	\leq & \frac{4(M-2)2^{|AB|}}{M(M-1)}  \mathrm{Tr} \left( \rho_{AB}^2 \right)
	+ \frac{2\times 4^{|AB|}}{M(M-1)} \mathrm{Tr} \left( \Pi^2_{AB} \right) \nonumber\\
	\leq & 4\left(\frac{2^{|AB|} p_2}{M} \right)
	+ 4 \left( \frac{2^{1.5|AB|}}{M} \right)^2. \label{eq:p2-variance}
\end{align}
Chebyshev's inequality \eqref{eq:chebyshev} allows us to translate this insight into an error bound.

\begin{lemma}[Error bound for estimating $p_2$] \label{lem:p2}
	Fix a subsystem $AB$ of interest and suppose that we wish to estimate $p_2 = \mathrm{Tr} \big( (\rho_{AB}^{T_A})^2 \big)$. For $\epsilon, \delta >0$, a total of 
	\begin{equation}
		M \geq 8 \max \left\{ \frac{2^{|AB|} p_2}{\epsilon^2 \delta}, \frac{2^{1.5 |AB|}}{\epsilon \sqrt{\delta}} \right\} 
		\label{eq:p2-sampling-rate}
	\end{equation}
	snapshots suffice to ensure that the estimator \eqref{eq:p2-estimator} obeys $|\hat{p}_2 - p_2 | \leq \epsilon$ with probability at least $1-\delta$.
\end{lemma}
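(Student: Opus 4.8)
The plan is to combine the variance bound in Eq.~\eqref{eq:p2-variance} with Chebyshev's inequality \eqref{eq:chebyshev} and then reverse-engineer a sufficient condition on the number of snapshots $M$. Since essentially all of the analytical work is already contained in the variance decomposition derived above, the remaining argument is a short deterministic estimate, and I do not anticipate any genuine obstacle here.

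First I would invoke Chebyshev's inequality in the form \eqref{eq:chebyshev} to observe that $\mathrm{Pr}[|\hat{p}_2-p_2|\geq\epsilon]\leq \mathrm{Var}[\hat{p}_2]/\epsilon^2$. Hence it suffices to enforce $\mathrm{Var}[\hat{p}_2]\leq \epsilon^2\delta$ in order to guarantee a failure probability of at most $\delta$. Substituting the two-term variance bound \eqref{eq:p2-variance},
\[
	\mathrm{Var}[\hat{p}_2]\leq 4\,\frac{2^{|AB|}p_2}{M}+4\left(\frac{2^{1.5|AB|}}{M}\right)^2,
\]
reduces the task to making this right-hand side at most $\epsilon^2\delta$.

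The key step is to control each of the two terms separately, demanding that each be at most $\epsilon^2\delta/2$ so that their sum does not exceed $\epsilon^2\delta$. The linear term $4\cdot 2^{|AB|}p_2/M$ decays as $1/M$ and is made small enough once $M\geq 8\cdot 2^{|AB|}p_2/(\epsilon^2\delta)$, whereas the quadratic term $4\cdot 2^{3|AB|}/M^2$ decays as $1/M^2$ and is controlled once $M\geq \sqrt{8}\,2^{1.5|AB|}/(\epsilon\sqrt{\delta})$. Imposing both conditions simultaneously through a single maximum, and replacing the constant $\sqrt{8}$ by the slightly larger but cleaner value $8$, yields exactly the stated sampling rate \eqref{eq:p2-sampling-rate}. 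The two arguments of the maximum correspond precisely to the two decay regimes highlighted in the main text: for highly mixed states (small $p_2$) at moderate accuracy the $1/M$ term is binding, while in the high-accuracy regime the faster-decaying $1/M^2$ term dominates.

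The only point requiring minor care is the bookkeeping of constants, namely verifying that the common prefactor $8$ in \eqref{eq:p2-sampling-rate} indeed dominates the precise constant $\sqrt{8}\approx 2.83$ needed to tame the quadratic term, which it clearly does. Splitting the variance budget evenly between the two terms is merely a convenient design choice; any other allocation would change only the numerical constants and not the structure of the bound.
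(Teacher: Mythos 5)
Your proposal is correct and follows exactly the route the paper takes: it combines the variance bound \eqref{eq:p2-variance} with Chebyshev's inequality \eqref{eq:chebyshev}, splits the variance budget $\epsilon^2\delta$ evenly between the linear and quadratic terms, and absorbs the resulting constants ($8$ and $\sqrt{8}$) into the single prefactor $8$ in \eqref{eq:p2-sampling-rate}. The arithmetic checks out, so nothing further is needed.
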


It is worthwhile to briefly discuss this two-pronged error bound. Asymptotically, i.e.\ for $M \to \infty$, the approximation error decays at a rate proportional to $1/\sqrt{M}$. This is the expected asymptotic decay rate for an estimation procedure that relies on empirical averaging (Monte Carlo). The actual rate is also multiplicative, i.e.\ the approximation error is proportional to the target $p_2$.
In the practically more relevant, non-asymptotic setting, things can look strikingly different. For small and moderate sample sizes $M$, the variance bound \eqref{eq:p2-variance} is dominated by the next-to-leading order term ($2^{1.5|AB|} > 2^{|AB|} p_2$, especially if $p_2$ is small). 
Lemma~\ref{lem:p2} captures this discrepancy and heralds an error decay rate proportional to $1/M$ in this regime.

Finally, we point out that the dependence on $\delta$ in Eq.~\eqref{eq:p2-sampling-rate} can be considerably improved by using median of means estimation \cite{Huang2020}: split the total data into equally sized chunks, construct independent estimators and take their median. For this procedure, a sampling rate proportional to $\log (1/\delta)$ suffices. Moreover, median of means
is much more robust towards outlier corruption and allows for using the same data to predict purities of many different subsystems simultaneously. 
This, however, comes at the price of 
somewhat larger constants in the error bound \eqref{eq:p2-sampling-rate} and heralds a tradeoff. In statistical terms, 
median of means estimation dramatically increases confidence levels ($1-\delta$) at the cost of slightly larger
error bars (confidence intervals).
This tradeoff becomes advantageous when one attempts to predict very many properties from a single data set.

\subsection{Predicting cubic properties ($p_3$ and $\mathrm{Tr}(\rho_{AB}^3)$)} \label{sub:cubic}

Cubic properties can be predicted in much the same fashion as quadratic properties \cite{Huang2020}. Write $o = \mathrm{Tr} \left( O \rho_{AB} \otimes \rho_{AB} \otimes \rho_{AB} \right)$ and approximate $\rho \otimes \rho \otimes \rho$ by a \emph{symmetric} tensor product of three distinct snapshots $\hat{\rho}_{AB}^{(i)},\hat{\rho}_{AB}^{(j)},\hat{\rho}_{AB}^{(k)}$:
\begin{equation}
	\tfrac{1}{3!} \sum_{\pi \in \mathcal{S}_3} \hat{\rho}_{AB}^{(\pi (i))} \otimes \hat{\rho}_{AB}^{(\pi (j))} \otimes \hat{\rho}_{AB}^{(\pi (k))}.
\end{equation}
There are $\binom{M}{3}$ different ways of combining a collection of $M$ (independent) snapshots $\hat{\rho}_{AB}^{(1)},\ldots,\hat{\rho}_{AB}^{(M)}$ in this fashion. We estimate the cubic function $o$ by averaging over all of them (U-statistics \cite{Hoeffding1992}):
\begin{align}
	\hat{o} = {\binom{M}{3}}^{-1} \sum_{i<j<k} \mathrm{Tr} \left( O \tfrac{1}{3!} \sum_{\pi \in \mathcal{S}_3} \hat{\rho}^{(\pi (i))} \otimes \hat{\rho}^{(\pi (j))} \otimes \hat{\rho}^{(\pi (k))}\right).
\end{align}
Once more, the variance controls the rate of 
convergence to the desired target value $\mathbb{E} \left[ \hat{o} \right] =\mathrm{Tr} \left( O \rho \otimes \rho \otimes \rho \right)$. This variance decomposes into a linear, a quadratic and a cubic part. The argument is a straightforward generalization of the analysis from the previous subsection. 
Rather than repeating the steps in full generality, we directly focus on the 3rd order PT-moment $p_3$ of a subsystem $AB$:
\begin{equation}
	p_3 = \mathrm{Tr} \left( (\rho_{AB}^{T_A})^3 \right).
\end{equation}
For notational simplicity, we suppress the subscript $AB$ indicating the subsystem of interest and label the shadows by lower-case indices: $\hat{\rho}^{(i)}_{AB} \mapsto \hat{\rho}_i$ for $i=1,\ldots,M$. Due to the cyclicity of the trace, the U-statistics estimator simplifies to
\begin{align}
	\binom{M}{3}\hat{p}_3 =& 
	\sum_{i <j<k} \mathrm{Tr} \left( \tfrac{1}{3!} \sum_{\pi \in \mathcal{S}_3} \hat{\rho}_{\pi (i)}^{T_A} \hat{\rho}_{\pi (j)}^{T_A} \hat{\rho}_{\pi (k)}^{T_A} \right) \label{eq:p3-estimator} \\
	=&
	\sum_{i<j<k} \tfrac{1}{2} \left( \mathrm{Tr} \left( \hat{\rho}_i^{T_A} \hat{\rho}_j^{T_A} \hat{\rho}_k^{T_A} \right) + \mathrm{Tr} \left( \hat{\rho}_j^{T_A} \hat{\rho}_i^{T_A} \hat{\rho}_k^{T_A} \right) \right), \nonumber
\end{align}
where we have moved the normalization factor $\binom{M}{3}^{-1}$ to the left hand side in order to
to increase readability. 
When computing the variance, we need to consider two sums over triples of distinct indices in $\left\{1,\ldots,M\right\}$. If all indices are distinct, the overall contribution vanishes. Otherwise the contribution depends on the number $c \in \left\{1,2,3\right\}$ of indices the triples have in common. The number of distinct choices for two triples with exactly $c$ integers in common is $\binom{M}{3}\binom{3}{c} \binom{M-3}{3-c}$ and we infer
\begin{align}
	& \binom{M}{3} \mathrm{Var} \left[ \hat{p}_3 \right] \nonumber \\
	=& 
	\binom{3}{1} \binom{M-3}{2}
	\mathrm{Var} \left[ \mathrm{Tr} \left( (\rho^{T_A})^2 \hat{\rho}^{T_A}_1\right) \right] \nonumber\\
	+& 
	\binom{3}{2} \binom{M-3}{1}
	\mathrm{Var} \left[\mathrm{Tr} \left( \rho^{T_A} \tfrac{1}{2} \left( \hat{\rho}_1^{T_A} \rho_2^{T_A}+\hat{\rho}_2^{T_A} \hat{\rho}_1^{T_A} \right) \right) \right] \nonumber\\
	+& 
	\mathrm{Var} \left[ \tfrac{1}{2} \left( \mathrm{Tr} \left( \hat{\rho}_1^{T_A} \hat{\rho}_2^{T_A} \hat{\rho}_3^{T_A} \right) + \mathrm{Tr} \left( \hat{\rho}_2^{T_A} \hat{\rho}_1^{T_A} \hat{\rho}_3^{T_A} \right) \right) \right] \nonumber \\
	\leq & \binom{M}{3}\left( \frac{9}{M} L + \frac{18}{M^2} Q + \frac{12}{M^3} C \right).
\end{align}
Here, $\hat{\rho}_1,\hat{\rho}_2,\hat{\rho}_3$ denote independent, random realizations of the snapshot $\hat{\rho}$ and we have introduced place-holders for linear ($L$), quadratic ($Q$) and cubic ($C$) contributions, respectively.
For the task at hand, these contributions can be bounded individually and depend on the subsystem size $AB$:
\begin{enumerate}
	\item \emph{linear contribution:} set $O = (\rho^{T_A}_{AB})^2$ for notational brevity.
	We can use $\mathrm{Tr}(O \hat{\rho}^{T_A}) = \mathrm{Tr}(O^{T_A} \hat{\rho} )$ to absorb the partial transpose in the linear function. Rel.~\eqref{eq:variance} then ensures
	\begin{align}
		L \leq 2^{|AB|} \mathrm{Tr} (\rho^2)^2,
	\end{align}
	where we have also used $\mathrm{Tr}((O^{T_A})^2) = \mathrm{Tr}(O^2)$, as well as $\mathrm{Tr}(O^2) = \|O \|_2^2 \leq \|O\|_1^2= \mathrm{Tr}(O)^2$, because $O=\rho^2$ is psd.
	\item \emph{quadratic contribution:} We can bring
	$ \tfrac{1}{2} \left( \mathrm{Tr}(\rho^{T_A} \hat{\rho}_1^{T_A} \hat{\rho}_2^{T_A}) + \mathrm{Tr} (\rho^{T_A} \hat{\rho}_2^{T_A} \hat{\rho}_1^{T_A} \right)$ 
	into the canonical form $\mathrm{Tr} \left( O \hat{\rho}_{AB}^{(1)} \otimes \hat{\rho}_{AB}^{(2)} \right)$ by introducing
	\begin{align}
		O = \tfrac{1}{2} &\left( \Pi_A (\rho \otimes \mathbb{I}_{AB}) \Pi_B \right. \nonumber \\ & \left.  + \Pi_B (\rho \otimes \mathbb{I}_{AB}) \Pi_A
		\right).
	\end{align}
	We refer to Table~\ref{tab:diagrams} for a visual derivation.
	Here, $\Pi_A$ and $\Pi_B$ are permutation operators that swap the two $A$- and $B$-systems, respectively. Rel.~\eqref{eq:quadratic-variance} then ensures
	\begin{align}
		Q \leq 2^{2|AB|} \mathrm{Tr} \left( O^2 \right) \leq 
		2^{3|AB|} \mathrm{Tr}(\rho^2).
	\end{align}
	The final estimate follows from exploiting $\Pi_A^2 = \Pi_B^2=\mathbb{I}_{AB}$, as well as $\mathrm{Tr} \left( \rho^2 \otimes \mathbb{I}_{AB}^2\right) = 2^{|AB|} \mathrm{Tr}(\rho^2)$.
	\item \emph{cubic contribution:} We can bring the cubic function $\tfrac{1}{2} \big( \mathrm{Tr}(\hat{\rho}^{T_A}_1 \hat{\rho}^{T_A}_2 \hat{\rho}^{T_A}_3 ) + \mathrm{Tr} (\hat{\rho}^{T_A}_2 \hat{\rho}^{T_A}_1 \hat{\rho}^{T_A}_3) \big)$
	into the canonical form $\mathrm{Tr}\big(O \hat{\rho}^{}_1 \otimes \hat{\rho}^{}_2 \otimes \hat{\rho}^{}_3 \big)$ by introducing
	\begin{align}
		O=\tfrac{1}{2} \left( \overrightarrow{\Pi}_A \otimes \overleftarrow{\Pi}_B + \overrightarrow{\Pi}^\dagger_{A} \otimes \overleftarrow{\Pi}^\dagger_{B} \right),
	\end{align}
	see Table~\ref{tab:diagrams} below.
	Here, $\overrightarrow{\Pi}_A$ is a cyclic permutation that exchanges all $A$-systems in a ``forward'' fashion ($A_1 \mapsto A_2$, $A_2 \mapsto A_3$, $A_3 \mapsto A_1$), while $\overleftarrow{\Pi}_B$ is another cyclic permutation that exchanges all $B$-systems in a ``backwards'' fashion ($B_3 \mapsto B_2$, $B_2 \mapsto B_1$, $B_1 \mapsto B_3$). A staightforward extension of Rel.~\eqref{eq:quadratic-variance} to cubic functions implies
	\begin{align}
		C \leq 2^{3|AB|} \mathrm{Tr}(O^2) \leq 2^{6|AB|},
	\end{align}
	because permutations are orthogonal ($\Pi \Pi^\dagger =\mathbb{I}$) and $\mathrm{Tr}(O^2)$ is dominated by $\mathrm{Tr}(\mathbb{I}_{AB} \otimes \mathbb{I}_{AB} \otimes \mathbb{I}_{AB}) =2^{3|AB|}$.
\end{enumerate}
Inserting these bounds into the variance formula for $p_3$ reveals
\begin{align}
	\mathrm{Var} \left[ \hat{p}_3 \right] \leq & \frac{9}{M} L + \frac{18}{M^2} Q + \frac{12}{M^3} C  \\
	\leq & 9\frac{2^{|AB|}}{M} \mathrm{Tr}(\rho^2)^2 + 18\frac{2^{3|AB|}}{M^2} \mathrm{Tr}(\rho^2) + 12\frac{2^{6|AB|}}{M^3} . \nonumber 
\end{align}
Combining this insight with Chebyshev's inequality \eqref{eq:chebyshev} produces a suitable error bound. Recall that $p_2 = \mathrm{Tr}\big ( (\rho^{T_A})^2 \big) = \mathrm{Tr}(\rho^2) \in [2^{-|AB|},1]$ denotes the purity of the subsystem in question.

\begin{lemma}[Error bound for estimating $p_3$] \label{lem:p3}
	Fix a subsystem $AB$ of interest and suppose that we wish to estimate $p_3 = \mathrm{Tr} \big( (\rho^{T_A})^3 \big)$. For $\epsilon, \delta >0$, a total of 
	\begin{equation}
		M \geq 39 \max \left\{ \frac{2^{|AB|} p_2^2}{\epsilon^2 \delta}, \frac{2^{1.5 |AB|}p_2}{\epsilon \sqrt{\delta}}, \frac{2^{2|AB|}}{\epsilon^{2/3} \delta^{1/3}} \right\} 
		\label{eq:p3-sampling-rate}
	\end{equation}
	snapshots suffice to ensure that the estimator \eqref{eq:p3-estimator} obeys $|\hat{p}_3 - p_3 | \leq \epsilon$ with probability at least $1-\delta$. 
\end{lemma}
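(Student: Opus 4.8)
The plan is to feed the variance bound established just above into Chebyshev's inequality \eqref{eq:chebyshev} and then solve for the sample size $M$ that drives the failure probability below $\delta$. Concretely, \eqref{eq:chebyshev} gives
\begin{equation*}
	\mathrm{Pr}\left[ |\hat{p}_3 - p_3| \geq \epsilon \right] \leq \frac{\mathrm{Var}[\hat{p}_3]}{\epsilon^2} \leq \frac{1}{\epsilon^2}\left( \frac{9\cdot 2^{|AB|}}{M}p_2^2 + \frac{18\cdot 2^{3|AB|}}{M^2}p_2 + \frac{12\cdot 2^{6|AB|}}{M^3} \right),
\end{equation*}
so it suffices to choose $M$ large enough that the right-hand side is at most $\delta$, i.e.\ that $\mathrm{Var}[\hat{p}_3]\leq \epsilon^2\delta$. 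The three summands decay at the distinct rates $1/M$, $1/M^2$ and $1/M^3$, which is precisely the origin of the three regimes advertised in the main text, and each will be responsible for one entry of the final maximum.

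Next I would control the summands one at a time. Demanding that the first and third terms be individually at most a fixed fraction of $\epsilon^2\delta$ and solving for $M$ yields the outer two thresholds directly: the $1/M$ term gives $M\gtrsim 2^{|AB|}p_2^2/(\epsilon^2\delta)$ and the $1/M^3$ term gives $M\gtrsim 2^{2|AB|}/(\epsilon^{2/3}\delta^{1/3})$, matching $T_1$ and $T_3$ in \eqref{eq:p3-sampling-rate}. The middle ($1/M^2$) term does not require a constraint of its own: up to the harmless factor $2^{-|AB|/2}\leq 1$ it is exactly the geometric mean of the outer two terms (the quotient of $18\cdot 2^{3|AB|}p_2/M^2$ by $\sqrt{(9\cdot 2^{|AB|}p_2^2/M)(12\cdot 2^{6|AB|}/M^3)}$ equals $\sqrt{3}\,2^{-|AB|/2}$), so by the arithmetic--geometric-mean inequality it is automatically suppressed once the first and third are. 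Tracking this bound is also what makes the intermediate scale $T_2\propto 2^{1.5|AB|}p_2/(\epsilon\sqrt{\delta})$ surface; since $p_2\leq 1$, retaining $T_2$ inside the maximum can only enlarge $M$, so sufficiency is preserved.

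Finally I would collect the numerical prefactors. Each per-term solve, together with the constant $\sqrt{3}$ generated by the cross term, contributes an explicit factor; the largest of these (inherited from the leading $1/M$ term) fixes the overall scale, and rounding up to the single universal constant $39$ dominates all of them at once. This shows that $M\geq 39\max\{T_1,T_2,T_3\}$ forces $\mathrm{Var}[\hat{p}_3]\leq\epsilon^2\delta$, hence $|\hat{p}_3-p_3|\leq\epsilon$ with probability at least $1-\delta$. I expect the genuinely delicate step to be the middle term: it must be absorbed by the outer two rather than bounded in isolation, since an isolated bound would carry an unwanted factor $1/p_2$ (recall $p_2\geq 2^{-|AB|}$) and fail, and the bookkeeping then has to confirm that one constant suffices uniformly across all three decay regimes.
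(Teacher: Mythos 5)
Your proposal is correct and follows the same route as the paper: insert the three-term variance bound $\mathrm{Var}[\hat{p}_3]\leq 9\cdot 2^{|AB|}p_2^2/M+18\cdot 2^{3|AB|}p_2/M^2+12\cdot 2^{6|AB|}/M^3$ into Chebyshev's inequality \eqref{eq:chebyshev} and solve for $M$. The paper compresses this last step into a single sentence, and you correctly supply the one genuinely delicate detail it omits: the $1/M^2$ term cannot be controlled by the threshold $2^{1.5|AB|}p_2/(\epsilon\sqrt{\delta})$ in isolation (that would leave a residual $1/p_2\leq 2^{|AB|}$ factor), but your computation that it equals $\sqrt{3}\,2^{-|AB|/2}$ times the geometric mean of the outer two terms, hence is suppressed by AM--GM once those are, closes the gap and confirms that the constant $39$ suffices.
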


This bound on the sampling rate provides different error decay rates for different regimes. For $M \to \infty$, the first term in the maximum dominates and the error decays at an asymptotically unavoidable rate proportional to $1/\sqrt{M}$. 
Conversely, for very small sample sizes $M$, the third term dominates and conveys a much larger decay rate proportional to $1/M^{3/2}$. In the intermediate regime, the second term may dominate and lead to a inverse linear decay rate $1/M$, instead.
The dependence on the error parameter $\delta$ can once more be considerably improved (from $1/\delta$ to $\log (1/\delta)$) by using median of means estimation. This refinement also allows for using the same data to predict the cubic PT-moment of very many subsystems simultaneously \cite{Huang2020}. 

Finally, we point out that the estimation error for $s_3 =\|\rho \|_3^3=\mathrm{Tr}(\rho^3)$ can be bounded in exactly the same fashion. 
For $\epsilon,\delta >0$, a sampling rate $M$ that obeys Rel.~\eqref{eq:p3-sampling-rate} also ensures that the U-statistics estimator U-statistics estimator
\begin{align}
	\hat{s}_3 = \binom{M}{3}^{-1} \sum_{i<j<k} \nonumber\tfrac{1}{2} &\left( \mathrm{Tr} \left( \hat{\rho}^{}_i \hat{\rho}^{}_j \hat{\rho}^{}_k \right) \right. \\
	&\left.\; + \mathrm{Tr} \left( \hat{\rho}^{}_j \hat{\rho}^{}_i \hat{\rho}^{}_k \right) \right)
\end{align}
obeys $\left| \hat{s}_3 - s_3 \right| \leq \epsilon$ with probability $1-\delta$.

The proof is almost identical to the $p_3$-analysis and we leave it as an exercise for the dedicated reader. 

\subsection{Additional numerical simulations}

Here, we complement Fig.~2 of the MT by showing in Fig.~\ref{fig:staterrorsSM} statistical errors in the estimation of $p_2$ and $p_3$ for the ground state of the transverse Ising model $H=J(\sum_i \sigma_i^x  \sigma_{i+1}^x + \sigma_i^z)$ at criticality. We observe the same scaling behavior as in the case of the GHZ state.  For $p_2$ [panel a)], there are indeed two regimes with different decay rates ($1/M$ and $1/\sqrt{M}$). For  $p_3$ [panel b)],
the latter two decay rates $1/M$ and $1/\sqrt{M}$ are also clearly visible. In contrast, the early regime decay rate is not as pronounced.
This is likely due to limited system sizes -- $1/M^{3/2}$ does appropriately capture the decay of red dots (largest system size considered) in the top left corner, but seems to be absent in decay rates for smaller system sizes.

\begin{figure}
	\includegraphics[width=1.\linewidth]{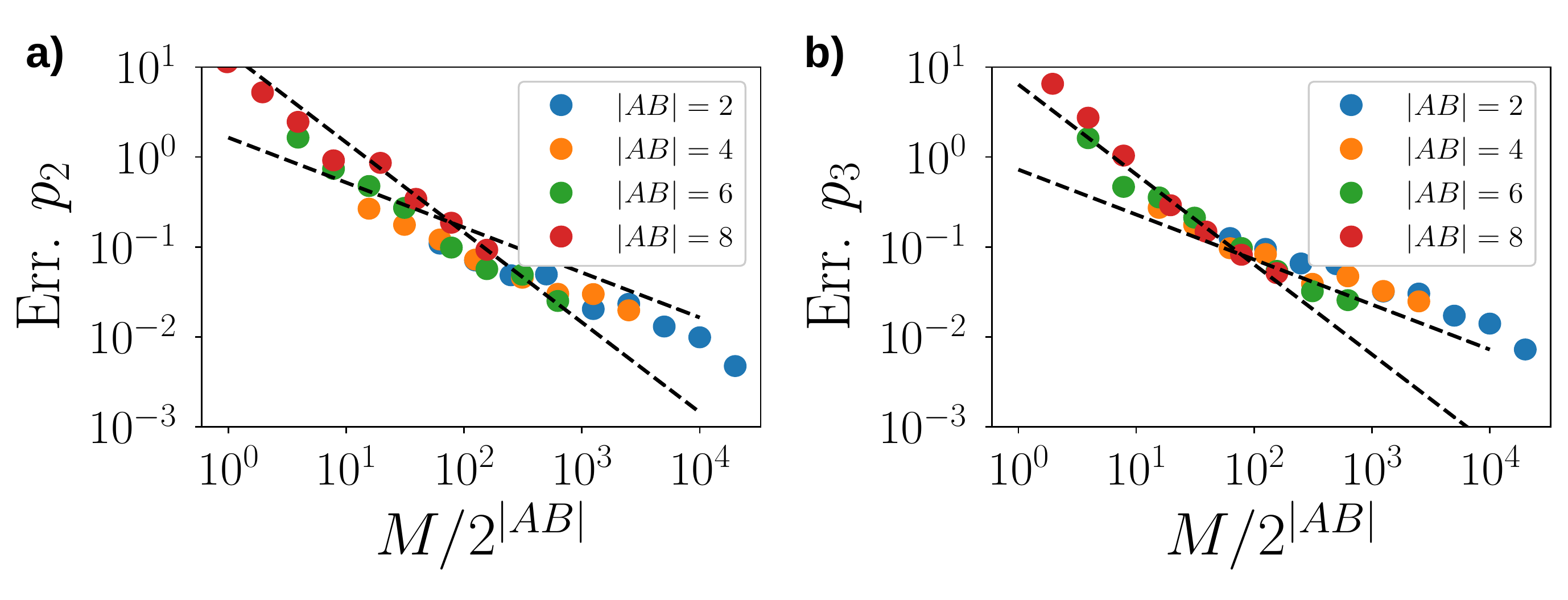}
	\caption{Statistical errors for the ground state of the transverse field Ising model. Dashed lines represent scalings of  $\propto 1/M$, and $\propto1/\sqrt{M}$. In all cases, the number of measurements to
		estimate $p_2$ a) and $p_3$ b) with accuracy $~0.1$ is of the order of  $100\times 2^{|AB|}$.}
	\label{fig:staterrorsSM}
\end{figure}

\section{Auxiliary results and wiring diagrams} \label{sec:wiring}

\begin{table*}[t!]
	\resizebox{2\columnwidth}{!}{
		\begin{tabular}{|c|c|c|c|}
			\hline
			\textbf{expression} & \textbf{diagram representation} & \textbf{diagram reformulation } & \textbf{modified expression} \\
			\hline
			\hline
			\begin{tikzpicture}[baseline,scale=0.5]
				\draw[transparent] (0,-1) -- (0,2);
				\node at (0,0.5) {
					$ \mathrm{Tr} ( X^{T_A}_{AB} Y^{T_A}_{AB})$
				};
			\end{tikzpicture}
			&
			\begin{tikzpicture}[baseline,scale=0.5]
				\draw[transparent] (1.5,-1) -- (1.5,2);
				\draw[thick] (-1.5,0) -- (4.5,0);
				\draw[rounded corners,fill=white] (0-0.75,-0.25) rectangle (0+0.75,1.25);
				\node at (0,0.5) {$X_{AB}$};
				\draw[rounded corners,fill=white] (3-0.75,-0.25) rectangle (3+0.75,1.25);
				\node at (3,0.5) {$Y_{AB}$};
				\draw[thick, rounded corners] (-0.75,1) -- (-1,1) -- (-1,1.5) -- (1.25,1.5) -- (1.25,1) -- (1.5,1);
				\draw[thick, rounded corners] (0.75,1) -- (1,1) -- (1,1.75) -- (-1.25,1.75) -- (-1.25,1) -- (-1.5,1);
				\draw[thick, rounded corners] (2.25,1) -- (2,1) --(2,1.5) -- (4.25,1.5) -- (4.25,1) -- (4.5,1);
				\draw[thick, rounded corners] (3.75,1) -- (4,1) -- (4,1.75) -- (1.75,1.75) -- (1.75,1) -- (1.5,1);
				\foreach \x in {-1.75,4.75}
				{
					\draw (\x,0) circle (0.25);
					\draw (\x,1) circle (0.25);
				}
			\end{tikzpicture}
			&
			\begin{tikzpicture}[baseline,scale=0.5]
				\draw[transparent] (1.5,-1) -- (1.5,2);
				\draw[thick] (-1.5,0) -- (4.5,0);
				\draw[thick] (-1.5,1) -- (4.5,1);
				\draw[rounded corners,fill=white] (0-0.75,-0.25) rectangle (0+0.75,1.25);
				\node at (0,0.5) {$X_{AB}$};
				\draw[rounded corners,fill=white] (3-0.75,-0.25) rectangle (3+0.75,1.25);
				\node at (3,0.5) {$Y_{AB}$};
				\foreach \x in {-1.75,4.75}
				{
					\draw (\x,0) circle (0.25);
					\draw (\x,1) circle (0.25);
				}
			\end{tikzpicture}
			& 
			\begin{tikzpicture}[baseline,scale=0.5]
				\draw[transparent] (0,-1) -- (0,2);
				\node at (0,0.5) {
					$\mathrm{Tr} \left( X_{AB} Y_{AB} \right)$
				};
			\end{tikzpicture}
			\\
			\hline
			\begin{tikzpicture}[baseline,scale=0.5]
				\draw[transparent] (0,-0.5) -- (0,3.5);
				\node at (0,1.5) {
					$\mathrm{Tr} \left( X_{AB} Y_{AB} \right)$
				};
			\end{tikzpicture}
			&
			\begin{tikzpicture}[baseline,scale=0.5]
				\draw[transparent] (0,-0.5) -- (0,3.5);
				\begin{scope}[yshift=1cm]
					\draw[thick] (-1.5,0) -- (4.5,0);
					\draw[thick] (-1.5,1) -- (4.5,1);
					\draw[rounded corners,fill=white] (0-0.75,-0.25) rectangle (0+0.75,1.25);
					\node at (0,0.5) {$X_{AB}$};
					\draw[rounded corners,fill=white] (3-0.75,-0.25) rectangle (3+0.75,1.25);
					\node at (3,0.5) {$Y_{AB}$};
					\foreach \x in {-1.75,4.75}
					{
						\draw (\x,0) circle (0.25);
						\draw (\x,1) circle (0.25);
					}
				\end{scope}
			\end{tikzpicture}
			&
			\begin{tikzpicture}[baseline,scale=0.5]
				\draw[transparent] (0,-0.5) -- (0,3.5);
				\foreach \y in {0,1,2,3}
				{
					\draw[thick] (-1,\y) -- (1,\y);
					\draw[thick] (-3,\y) -- (-3.5,\y);
					\draw (-3.75,\y) circle (0.25);
					\draw[thick] (3,\y) -- (5.5,\y);
					\draw (5.75,\y) circle (0.25);
				}
				\draw[thick, rounded corners] (-1,2) -- (-1.5,2) -- (-2.5,0) -- (-3,0);
				\draw[thick, rounded corners] (-1,0) -- (-1.5,0) -- (-2.5,2) -- (-3,2);
				\draw[thick] (-1,1) -- (-3,1);
				\draw[thick] (-1,3) -- (-3,3);
				\draw[Blue,dashed,rounded corners] (-3,-0.25) rectangle (-1,3.25);
				\fill[Blue,opacity=0.2,rounded corners] (-3,-0.25) rectangle (-1,3.25);
				\node at (-2,2.5) {\textcolor{Blue}{$\Pi_B$}};
				\draw[thick, rounded corners] (1,3) -- (1.5,3) -- (2.5,1) -- (3,1);
				\draw[thick, rounded corners] (1,1) -- (1.5,1) -- (2.5,3) -- (3,3);
				\draw[thick] (1,0) -- (3,0);
				\draw[thick] (1,2) -- (3,2);
				\draw[Red,dashed,rounded corners] (1,-0.25) rectangle (3,3.25);
				\fill[Red,opacity=0.2,rounded corners] (1,-0.25) rectangle (3,3.25);
				\node at (2,0.5) {\textcolor{Red}{$\Pi_A$}};
				\draw[rounded corners,fill=white] (4-0.75,-0.25) rectangle (4+0.75,1.25);
				\node at (4,0.5) {$Y_{AB}$};
				\draw[rounded corners,fill=white] (4-0.75,2-0.25) rectangle (4+0.75,3.25);
				\node at (4,2.5) {$X_{AB}$};
			\end{tikzpicture}
			&
			\begin{tikzpicture}[baseline,scale=0.5]
				\draw[transparent] (0,-0.5) -- (0,3.5);
				\node at (0,1.5) {
					$\mathrm{Tr} \left( \textcolor{Blue}{\Pi_B} \textcolor{Red}{\Pi_A} X_{AB} \otimes Y_{AB} \right)$
				};
				\node at (0,0) {\footnotesize $\textcolor{Blue}{\Pi_B}, \textcolor{Red}{\Pi_A}$: swaps};
			\end{tikzpicture}
			\\
			\hline
			\begin{tikzpicture}[baseline,scale=0.5]
				\draw[transparent] (0,-0.5) -- (0,3.5);
				\node at (0,1.5) {
					$ \mathrm{Tr} ( \rho^{T_A}_{AB} X^{T_A}_{AB} Y^{T_A}_{AB})$
				};
			\end{tikzpicture}
			&
			\begin{tikzpicture}[baseline,scale=0.5]
				\draw[transparent] (0,-0.5) -- (0,3.5);
				\begin{scope}[yshift=1cm]
					\draw[thick] (-1.5,0) -- (7.5,0);
					\foreach \x in {0,3,6}
					{
						\draw[rounded corners,fill=white] (\x-0.75,-0.25) rectangle (\x+0.75,1.25);
						\draw[thick, rounded corners] (\x-0.75,1) -- (\x-1,1) -- (\x-1,1.5) -- (\x+1.25,1.5) -- (\x+1.25,1) -- (\x+1.5,1);
						\draw[thick, rounded corners] (\x+0.75,1) -- (\x+1,1) -- (\x+1,1.75) -- (\x-1.25,1.75) -- (\x-1.25,1) -- (\x-1.5,1);
					}
					\node at (0,0.5) {$\rho_{AB}$};
					\node at (3,0.5) {$X_{AB}$};
					\node at (6,0.5) {$Y_{AB}$};
					\foreach \x in {-1.75,7.75}
					{
						\draw (\x,0) circle (0.25);
						\draw (\x,1) circle (0.25);
					}
				\end{scope}
			\end{tikzpicture}
			&
			\begin{tikzpicture}[baseline,scale=0.5]
				\draw[transparent] (0,-0.5) -- (0,3.5);
				\foreach \y in {0,1,2,3}
				{
					\draw[thick] (-1,\y) -- (1,\y);
					\draw[thick] (-3,\y) -- (-3.5,\y);
					\draw (-3.75,\y) circle (0.25);
					\draw[thick] (3,\y) -- (5.5,\y);
					\draw (5.75,\y) circle (0.25);
				}
				\draw[rounded corners,fill=white] (-0.75,1.75) rectangle (0.75,3.25);
				\node at (0,2.5) {$\rho_{AB}$};
				\draw[rounded corners, fill=white,opacity=0.5] (-0.75,-0.25) rectangle (0.75,1.25);
				\node at (0,0.5) {$\mathbb{I}_{AB}$};
				\draw[thick, rounded corners] (-1,2) -- (-1.5,2) -- (-2.5,0) -- (-3,0);
				\draw[thick, rounded corners] (-1,0) -- (-1.5,0) -- (-2.5,2) -- (-3,2);
				\draw[thick] (-1,1) -- (-3,1);
				\draw[thick] (-1,3) -- (-3,3);
				\draw[Blue,dashed,rounded corners] (-3,-0.25) rectangle (-1,3.25);
				\fill[Blue,opacity=0.2,rounded corners] (-3,-0.25) rectangle (-1,3.25);
				\node at (-2,2.5) {\textcolor{Blue}{$\Pi_B$}};
				\draw[thick, rounded corners] (1,3) -- (1.5,3) -- (2.5,1) -- (3,1);
				\draw[thick, rounded corners] (1,1) -- (1.5,1) -- (2.5,3) -- (3,3);
				\draw[thick] (1,0) -- (3,0);
				\draw[thick] (1,2) -- (3,2);
				\draw[Red,dashed,rounded corners] (1,-0.25) rectangle (3,3.25);
				\fill[Red,opacity=0.2,rounded corners] (1,-0.25) rectangle (3,3.25);
				\node at (2,0.5) {\textcolor{Red}{$\Pi_A$}};
				\draw[rounded corners,fill=white] (4-0.75,-0.25) rectangle (4+0.75,1.25);
				\node at (4,0.5) {$Y_{AB}$};
				\draw[rounded corners,fill=white] (4-0.75,2-0.25) rectangle (4+0.75,3.25);
				\node at (4,2.5) {$X_{AB}$};
			\end{tikzpicture}
			&
			\begin{tikzpicture}[baseline,scale=0.5]
				\draw[transparent] (0,-0.5) -- (0,3.5);
				\node at (0,1.5) {
					$\mathrm{Tr} \left( \textcolor{Blue}{\Pi_B} (\rho_{AB} \otimes \textcolor{gray}{\mathbb{I}_{AB}}) \textcolor{Red}{\Pi_A} X_{AB} \otimes Y_{AB} \right)$
				};
			\end{tikzpicture}
			\\
			\hline
			\begin{tikzpicture}[baseline,scale=0.5]
				\draw[transparent] (0,-0.5) -- (0,3.5);
				\node at (0,1.5) {
					$ \mathrm{Tr} ( \rho^{T_A}_{AB} Y^{T_A}_{AB} X^{T_A}_{AB})$
				};
			\end{tikzpicture}
			&
			\begin{tikzpicture}[baseline,scale=0.5]
				\draw[transparent] (0,-0.5) -- (0,3.5);
				\begin{scope}[yshift=1cm]
					\draw[thick] (-1.5,0) -- (7.5,0);
					\foreach \x in {0,3,6}
					{
						\draw[rounded corners,fill=white] (\x-0.75,-0.25) rectangle (\x+0.75,1.25);
						\draw[thick, rounded corners] (\x-0.75,1) -- (\x-1,1) -- (\x-1,1.5) -- (\x+1.25,1.5) -- (\x+1.25,1) -- (\x+1.5,1);
						\draw[thick, rounded corners] (\x+0.75,1) -- (\x+1,1) -- (\x+1,1.75) -- (\x-1.25,1.75) -- (\x-1.25,1) -- (\x-1.5,1);
					}
					\node at (0,0.5) {$\rho_{AB}$};
					\node at (3,0.5) {$Y_{AB}$};
					\node at (6,0.5) {$X_{AB}$};
					\foreach \x in {-1.75,7.75}
					{
						\draw (\x,0) circle (0.25);
						\draw (\x,1) circle (0.25);
					}
				\end{scope}
			\end{tikzpicture}
			&
			\begin{tikzpicture}[baseline,scale=0.5]
				\draw[transparent] (0,-0.5) -- (0,3.5);
				\foreach \y in {0,1,2,3}
				{
					\draw[thick] (-1,\y) -- (1,\y);
					\draw[thick] (-3,\y) -- (-3.5,\y);
					\draw (-3.75,\y) circle (0.25);
					\draw[thick] (3,\y) -- (5.5,\y);
					\draw (5.75,\y) circle (0.25);
				}
				\draw[rounded corners,fill=white] (-0.75,1.75) rectangle (0.75,3.25);
				\node at (0,2.5) {$\rho_{AB}$};
				\draw[rounded corners, fill=white,opacity=0.5] (-0.75,-0.25) rectangle (0.75,1.25);
				\node at (0,0.5) {$\mathbb{I}_{AB}$};
				\draw[thick, rounded corners] (-1,3) -- (-1.5,3) -- (-2.5,1) -- (-3,1);
				\draw[thick, rounded corners] (-1,1) -- (-1.5,1) -- (-2.5,3) -- (-3,3);
				\draw[thick] (-1,0) -- (-3,0);
				\draw[thick] (-1,2) -- (-3,2);
				\draw[Red,dashed,rounded corners] (-3,-0.25) rectangle (-1,3.25);
				\fill[Red,opacity=0.2,rounded corners] (-3,-0.25) rectangle (-1,3.25);
				\node at (-2,0.5) {\textcolor{Red}{$\Pi_A$}};
				\draw[thick, rounded corners] (1,2) -- (1.5,2) -- (2.5,0) -- (3,0);
				\draw[thick, rounded corners] (1,0) -- (1.5,0) -- (2.5,2) -- (3,2);
				\draw[thick] (1,1) -- (3,1);
				\draw[thick] (1,3) -- (3,3);
				\draw[Blue,dashed,rounded corners] (1,-0.25) rectangle (3,3.25);
				\fill[Blue,opacity=0.2,rounded corners] (1,-0.25) rectangle (3,3.25);
				\node at (2,2.5) {\textcolor{Blue}{$\Pi_B$}};
				\draw[rounded corners,fill=white] (4-0.75,-0.25) rectangle (4+0.75,1.25);
				\node at (4,0.5) {$Y_{AB}$};
				\draw[rounded corners,fill=white] (4-0.75,2-0.25) rectangle (4+0.75,3.25);
				\node at (4,2.5) {$X_{AB}$};
			\end{tikzpicture}
			&
			\begin{tikzpicture}[baseline,scale=0.5]
				\draw[transparent] (0,-0.5) -- (0,3.5);
				\node at (0,1.5) {
					$\mathrm{Tr} \left( \textcolor{Red}{\Pi_A} (\rho_{AB} \otimes \textcolor{gray}{\mathbb{I}_{AB}}) \textcolor{Blue}{\Pi_B} X_{AB} \otimes Y_{AB} \right)$
				};
			\end{tikzpicture}
			\\
			\hline
			\begin{tikzpicture}[baseline,scale=0.5]
				\draw[transparent] (0,-0.5) -- (0,5.5);
				\node at (0,2.5) {$\mathrm{Tr} \left( X^{T_A}_{AB} Y^{T_A}_{AB} Z^{T_A}_{AB} \right)$};
			\end{tikzpicture}
			&
			\begin{tikzpicture}[baseline,scale=0.5]
				\draw[transparent] (0,-0.5) -- (0,5.5);
				\begin{scope}[yshift=2cm]
					\draw[thick] (-1.5,0) -- (7.5,0);
					\foreach \x in {0,3,6}
					{
						\draw[rounded corners,fill=white] (\x-0.75,-0.25) rectangle (\x+0.75,1.25);
						\draw[thick, rounded corners] (\x-0.75,1) -- (\x-1,1) -- (\x-1,1.5) -- (\x+1.25,1.5) -- (\x+1.25,1) -- (\x+1.5,1);
						\draw[thick, rounded corners] (\x+0.75,1) -- (\x+1,1) -- (\x+1,1.75) -- (\x-1.25,1.75) -- (\x-1.25,1) -- (\x-1.5,1);
					}
					\node at (0,0.5) {$X_{AB}$};
					\node at (3,0.5) {$Y_{AB}$};
					\node at (6,0.5) {$Z_{AB}$};
					\foreach \x in {-1.75,7.75}
					{
						\draw (\x,0) circle (0.25);
						\draw (\x,1) circle (0.25);
					}
				\end{scope}
			\end{tikzpicture}
			&
			\begin{tikzpicture}[baseline,scale=0.5]
				\foreach \y in {0,1,2,3,4,5}
				{
					\draw[thick] (3,\y) -- (3.25,\y);
					\draw[thick] (4.75,\y) -- (5,\y);
					\draw[thick] (-1,\y) -- (1,\y);
					\draw[thick] (-3,\y) -- (-3.5,\y);
					\draw (-3.75,\y) circle (0.25);
					\draw[thick] (3,\y) -- (5.5,\y);
					\draw (5.75,\y) circle (0.25);
				}
				\draw[thick, rounded corners] (-3,4) -- (-2.5,4) -- (-1.5,2) -- (-1,2);
				\draw[thick,rounded corners] (-3,2) -- (-2.5,2) -- (-1.5,0) -- (-1,0);
				\draw[thick, rounded corners] (-3,0) -- (-2.5,0) -- (-1.5,4) -- (-1,4);
				\draw[thick] (-3,1) -- (-1,1);
				\draw[thick] (-3,3) -- (-1,3);
				\draw[thick] (-3,5) -- (-1,5);
				\draw[Blue,dashed,rounded corners] (-3,-0.25) rectangle (-1,5.25);
				\fill[Blue,opacity=0.2,rounded corners] (-3,-0.25) rectangle (-1,5.25);
				\node at (-2,4.5) {\textcolor{Blue}{$\overrightarrow{\Pi}_B$}};
				\draw[thick, rounded corners] (1,1) -- (1.5,1) -- (2.5,3) -- (3,3);
				\draw[thick, rounded corners] (1,3) -- (1.5,3) -- (2.5,5) -- (3,5);
				\draw[thick, rounded corners] (1,5) -- (1.5,5) -- (2.5,1) -- (3,1);
				\draw[thick] (1,0) -- (3,0);
				\draw[thick] (1,2) -- (3,2);
				\draw[thick](1,4) -- (3,4);
				\draw[Red,dashed,rounded corners] (1,-0.25) rectangle (3,5.25);
				\fill[Red, opacity=0.2,rounded corners] (1,-0.25) rectangle (3,5.25);
				\node at (2,0.5) {\textcolor{Red}{$\overleftarrow{\Pi}_{A}$}};
				\foreach \y in {0,2,4}
				{
					\draw[rounded corners,fill=white] (4-0.75,\y-0.25) rectangle (4+0.75,\y+1.25);
				}
				\node at (4,0.5) {$Z_{AB}$};
				\node at (4,2.5) {$Y_{AB}$};
				\node at (4,4.5) {$X_{AB}$};
			\end{tikzpicture}
			&
			\begin{tikzpicture}[baseline,scale=0.5]
				\draw[transparent] (0,-0.5) -- (0,5.5);
				\node at (0,2.5) {$\mathrm{Tr} \left( \textcolor{Blue}{\overrightarrow{\Pi}_B}\textcolor{Red}{\overleftarrow{\Pi}_{A}} X_{AB} \otimes Y_{AB} \otimes Z_{AB}\right)$};
				\node at (0,0.5) {\scriptsize $\textcolor{Blue}{\overrightarrow{\Pi}_B},\textcolor{Red}{\overleftarrow{\Pi}_{A}}$: cycle permutations};
			\end{tikzpicture}
			\\
			\hline
			\begin{tikzpicture}[baseline,scale=0.5]
				\draw[transparent] (0,-0.5) -- (0,5.5);
				\node at (0,2.5) {$\mathrm{Tr} \left( Y^{T_A}_{AB} X^{T_A}_{AB} Z^{T_A}_{AB} \right)$};
			\end{tikzpicture}
			&
			\begin{tikzpicture}[baseline,scale=0.5]
				\draw[transparent] (0,-0.5) -- (0,5.5);
				\begin{scope}[yshift=2cm]
					\draw[thick] (-1.5,0) -- (7.5,0);
					\foreach \x in {0,3,6}
					{
						\draw[rounded corners,fill=white] (\x-0.75,-0.25) rectangle (\x+0.75,1.25);
						\draw[thick, rounded corners] (\x-0.75,1) -- (\x-1,1) -- (\x-1,1.5) -- (\x+1.25,1.5) -- (\x+1.25,1) -- (\x+1.5,1);
						\draw[thick, rounded corners] (\x+0.75,1) -- (\x+1,1) -- (\x+1,1.75) -- (\x-1.25,1.75) -- (\x-1.25,1) -- (\x-1.5,1);
					}
					\node at (0,0.5) {$Y_{AB}$};
					\node at (3,0.5) {$X_{AB}$};
					\node at (6,0.5) {$Z_{AB}$};
					\foreach \x in {-1.75,7.75}
					{
						\draw (\x,0) circle (0.25);
						\draw (\x,1) circle (0.25);
					}
				\end{scope}
			\end{tikzpicture}
			&
			\begin{tikzpicture}[baseline,scale=0.5]
				\foreach \y in {0,1,2,3,4,5}
				{
					\draw[thick] (3,\y) -- (3.25,\y);
					\draw[thick] (4.75,\y) -- (5,\y);
					\draw[thick] (-1,\y) -- (1,\y);
					\draw[thick] (-3,\y) -- (-3.5,\y);
					\draw (-3.75,\y) circle (0.25);
					\draw[thick] (3,\y) -- (5.5,\y);
					\draw (5.75,\y) circle (0.25);
				}
				\draw[thick, rounded corners] (-3,5) -- (-2.5,5) -- (-1.5,3) -- (-1,3);
				\draw[thick, rounded corners] (-3,3) -- (-2.5,3) -- (-1.5,1) -- (-1,1);
				\draw[thick, rounded corners] (-3,1) -- (-2.5,1) -- (-1.5,5) -- (-1,5);
				\draw[thick] (-3,0) -- (-1,0);
				\draw[thick] (-3,2) -- (-1,2);
				\draw[thick](-3,4) -- (-1,4);
				\draw[Red,dashed,rounded corners] (-3,-0.25) rectangle (-1,5.25);
				\fill[Red, opacity=0.2,rounded corners] (-3,-0.25) rectangle (-1,5.25);
				\node at (-2,0.5) {\textcolor{Red}{$\overrightarrow{\Pi}_{A}$}};
				\draw[thick, rounded corners] (1,0) -- (1.5,0) -- (2.5,2) -- (3,2);
				\draw[thick, rounded corners] (1,2) -- (1.5,2) -- (2.5,4) -- (3,4);
				\draw[thick, rounded corners] (1,4) -- (1.5,4) -- (2.5,0) -- (3,0);
				\draw[thick] (1,1) -- (3,1);
				\draw[thick] (1,3) -- (3,3);
				\draw[thick] (1,5) -- (3,5);
				\draw[Blue,dashed,rounded corners] (1,-0.25) rectangle (3,5.25);
				\fill[Blue,opacity=0.2,rounded corners] (1,-0.25) rectangle (3,5.25);
				\node at (2,4.5) {\textcolor{Blue}{$\overleftarrow{\Pi}_B$}};
				\foreach \y in {0,2,4}
				{
					\draw[rounded corners,fill=white] (4-0.75,\y-0.25) rectangle (4+0.75,\y+1.25);
				}
				\node at (4,0.5) {$Z_{AB}$};
				\node at (4,2.5) {$Y_{AB}$};
				\node at (4,4.5) {$X_{AB}$};
			\end{tikzpicture}
			&
			\begin{tikzpicture}[baseline,scale=0.5]
				\draw[transparent] (0,-0.5) -- (0,5.5);
				\node at (0,2.5) {$\mathrm{Tr} \left( \textcolor{Red}{\overleftarrow{\Pi}_{A}}\textcolor{Blue}{\overrightarrow{\Pi}_B} X_{AB} \otimes Y_{AB} \otimes Z_{AB}\right)$};
				\node at (0,0.5) {\scriptsize $\textcolor{Red}{\overleftarrow{\Pi}_{A}}\textcolor{Blue}{\overrightarrow{\Pi}_B}
					= \left( \textcolor{Blue}{\overrightarrow{\Pi}_B}\textcolor{Red}{\overleftarrow{\Pi}_{A}}\right)^\dagger$};
			\end{tikzpicture} \\
			\hline
		\end{tabular}
	}
	\caption{\emph{Reformulations of relevant tensor product expressions:} The variance bounds in Sub.~\ref{sub:quadratic} and Sub.~\ref{sub:cubic} are contingent on bringing certain expressions into canonical form, i.e.\ $\mathrm{Tr} \left( O X_{AB} \otimes Y_{AB} \right)$ for bilinear functions and $\mathrm{Tr} \left( O' X_{AB} \otimes Y_{AB} \otimes Z_{AB} \right)$ for trilinear ones. This table supports visual derivations for these reformulations. Expressions of interest (very left) are first translated into wiring diagrams (center left). Subsequently, the rules of wiring calculus are used to re-arrange the diagrams (center right). 
		Translating them into formulas (very right) produces equivalent expressions that respect the desired structure.
	}
	\label{tab:diagrams}
\end{table*}

The arguments from the previous subsections 
make use of identities satisfied by traces of partial transposes of bipartite operators.
Wiring diagrams -- also known as tensor network diagrams -- provide a useful pictorial calculus 
for deriving such identifies. 
We refer the interested reader to Refs.~\cite{Landsberg2012,Bridgeman2017,Kueng2019} for a thorough introduction and content ourselves here with a concise overview that will suffice for the purposes at hand. 
The wiring formalism represents operators as boxes with lines emanating from them. These lines represent contra- (on the left) and co-variant indices (on the right):
\begin{align}
	X = \sum_{i,j} \left[X_{ij} \right]|i \rangle \! \langle j|
	=
	\begin{tikzpicture}[baseline,scale=0.5]
		\draw[thick] (-1.25,0) -- (1.25,0);
		\draw[rounded corners,fill=white] (-0.5,-0.5) rectangle (0.5,0.5);
		\node at (0,0) {$X$};
		\node at (-1,0.3) {\scriptsize \textcolor{gray}{$i$}};
		\node at (1,0.3) {\scriptsize \textcolor{gray}{$j$}};
	\end{tikzpicture}.
\end{align}
Two operators $X$ and $Y$ can be multiplied to produce another operator. 
This corresponds to an index contraction and is represented in the following fashion:
\begin{align}
	XY =& \sum_{i,k} ( \sum_j \left[X\right]_{ij} \left[ X \right]_{jk} ) |i \rangle \! \langle k|
	=
	\begin{tikzpicture}[baseline,scale=0.5]
		\draw[thick] (-2,0) -- (2,0);
		\draw[rounded corners,fill=white] (-1.5,-0.5) rectangle (-0.5,0.5);
		\node at (-1,0) {$X$};
		\draw[rounded corners, fill=white] (0.5,-0.5) rectangle (1.5,0.5);
		\node at (1,0) {$Y$};
		\node at (0,0.3) {\scriptsize \textcolor{gray}{$j$}};
		\node at (-2,0.3) {\scriptsize \textcolor{gray}{$i$}};
		\node at (2,0.3) {\scriptsize \textcolor{gray}{$k$}};
	\end{tikzpicture}.
\end{align}
Transposition exchanges outgoing (contravariant) and incoming (covariant) indices
\begin{align}
	X^T
	=& \sum_{i,j} \left[X\right]_{ij} |j \rangle \! \langle i|
	=
	\begin{tikzpicture}[baseline,scale=0.5]
		\draw[rounded corners,fill=white] (-0.5,-0.5) rectangle (0.5,0.5);
		\node at (0,0) {$X$};
		\draw[thick,rounded corners] (-0.5,0) -- (-0.75,0) -- (-0.75,0.75) -- (1,0.75) -- (1,0) -- (1.5,0);
		\draw[thick, rounded corners] (0.5,0) -- (0.75,0) -- (0.75,1) -- (-1,1) -- (-1,0) -- (-1.5,0);
		\node at (-1.5,0.3) { \scriptsize \textcolor{gray}{$j$}};
		\node at (1.5,0.3) {\scriptsize \textcolor{gray}{$i$}};
	\end{tikzpicture} ,
\end{align}
while the trace pairs up both indices and sums over them:
\begin{align}
	\mathrm{Tr}(X) =& \sum_{i} \left[X\right]_{ii}
	= \begin{tikzpicture}[baseline,scale=0.5]
		\draw[rounded corners,fill=white] (-0.5,-0.5) rectangle (0.5,0.5);
		\node at (0,0) {$X$};
		\node at (0,1.05) {\scriptsize \textcolor{gray}{$i$}};
		\draw[thick,rounded corners] (-0.5,0) -- (-1.25,0) -- (-1.25,0.75) -- (1.25,0.75) -- (1.25,0) -- (0.5,0);
	\end{tikzpicture}
	=
	\begin{tikzpicture}[baseline,scale=0.5]
		\draw[thick,rounded corners] (-1.25,0) -- (1.25,0);
		\draw[rounded corners,fill=white] (-0.5,-0.5) rectangle (0.5,0.5);
		\node at (0,0) {$X$};
		\draw (-1.5,0) circle (0.25);
		\draw (1.5,0) circle (0.25);
	\end{tikzpicture}.
\end{align}
We abbreviate this loop (contraction of leftmost and rightmost indices) by putting two circles at the end points of lines that should be contracted. This notation is not standard, but  will considerably increase the readability of more complex contraction networks.

This basic formalism readily extends to tensor products if we arrange tensor product factors in parallel. For instance, a bipartite operator features two parallel lines on the left and on the right:
\begin{align}
	X_{AB} 
	=&
	\begin{tikzpicture}[baseline,scale=0.5,yshift=-0.25cm]
		\foreach \y in {0,1}
		{\draw[thick] (-1.5,\y) -- (-0.75,\y);
			\draw[thick] (0.75,\y) -- (1.5,\y);
		}
		\draw[rounded corners, fill=white] (-0.75,-0.25) rectangle (0.75,1.25);
		\node at (0,0.5) {$X_{AB}$};
		\node at (-2,1) {\scriptsize \textcolor{gray}{$A$}};
		\node at (2,1) {\scriptsize \textcolor{gray}{$A$}};
		\node at (-2,0) {\scriptsize \textcolor{gray}{$B$}};
		\node at (2,0) {\scriptsize \textcolor{gray}{$B$}};
	\end{tikzpicture}
\end{align}
The upper lines represent the system $A$, while the lower lines represent system $B$. 
Two important bipartite operators are the identity $\mathbb{I}$ (do nothing) and the swap operator $\Pi$ that exchanges the systems:
\begin{align}
	\begin{tikzpicture}[baseline,scale=0.5,yshift=-0.25cm]
		\foreach \y in {0,1}
		{\draw[thick] (-1.5,\y) -- (1.5,\y);
		}
		\draw[rounded corners, fill=white,opacity=0.5] (-0.75,-0.25) rectangle (0.75,1.25);
		\node at (0,0.5) {$\mathbb{I}$};
	\end{tikzpicture}
	=
	\begin{tikzpicture}[baseline,scale=0.5,yshift=-0.25cm]
		\foreach \y in {0,1}
		{\draw[thick] (-1,\y) -- (1,\y);
		}
	\end{tikzpicture}
	\quad \text{and} \quad 
	\begin{tikzpicture}[baseline,scale=0.5,yshift=-0.25cm]
		\foreach \y in {0,1}
		{\draw[thick] (-1.5,\y) -- (-0.75,\y);
			\draw[thick] (0.75,\y) -- (1.5,\y);
		}
		\draw[rounded corners, fill=white] (-0.75,-0.25) rectangle (0.75,1.25);
		\node at (0,0.5) {$\Pi$};
	\end{tikzpicture}
	=
	\begin{tikzpicture}[baseline,scale=0.5,yshift=-0.25cm]
		\draw[thick, rounded corners] (-1,1) -- (-0.5,1) -- (0.5,0) -- (1,0);
		\draw[thick, rounded corners] (-1,0) -- (-0.5,0) -- (0.5,1) -- (1,1);
	\end{tikzpicture}.
\end{align}
Rules for multiplying and contracting operators readily extend to the tensor setting. This allows us to reformulate well-known expressions pictorially. For instance,
\begin{align}
	\mathrm{Tr}(XY)
	= &
	\begin{tikzpicture}[baseline,scale=0.5,yshift=0.25cm]
		\draw[thick] (-2,0) -- (2,0);
		\draw[rounded corners,fill=white] (-1.5,-0.5) rectangle (-0.5,0.5);
		\node at (-1,0) {$X$};
		\draw[rounded corners, fill=white] (0.5,-0.5) rectangle (1.5,0.5);
		\node at (1,0) {$Y$};
		\draw (2.25,0) circle (0.25);
		\draw (-2.25,0) circle (0.25);
	\end{tikzpicture}
	=
	\begin{tikzpicture}[baseline,scale=0.5,yshift=-0.25cm]
		\draw[rounded corners,fill=white]
		(-0.5,-0.75) rectangle (0.5,0.25);
		\draw[rounded corners, fill=white]
		(-0.5,0.75) rectangle (0.5,1.75);
		\node at (0,1.25) {$X$};
		\node at (0,-0.25) {$Y$};
		\draw[thick, rounded corners]
		(-1,1.25) -- (-1.5,1.25) -- (-2.5,-0.25) -- (-3,-0.25);
		\draw[thick, rounded corners] (-1,-0.25) -- (-1.5,-0.25) -- (-2.5,1.25) -- (-3,1.25);
		\foreach \y in {-0.25,1.25}
		{
			\draw[thick] (-3.5,\y) -- (-3,\y);
			\draw[thick] (-1,\y) -- (-0.5,\y);
			\draw[thick] (0.5,\y) -- (1,\y);
			\draw (1.25,\y) circle (0.25);
			\draw (-3.75,\y) circle (0.25);
		} 
	\end{tikzpicture} \\
	=& \mathrm{Tr} \left( \Pi X \otimes Y \right).
\end{align}
The wiring formalism is also exceptionally well-suited to capture partial operations, like the partial transpose:
\begin{align}
	X_{AB}^{T_A} =&
	\begin{tikzpicture}[baseline,scale=0.5,yshift=-0.5cm]
		\draw[thick, rounded corners] (-0.75,1) -- (-1,1) -- (-1,1.5) -- (1.25,1.5) -- (1.25,1) -- (1.5,1);
		\draw[thick, rounded corners] (0.75,1) -- (1,1) -- (1,1.75) -- (-1.25,1.75) -- (-1.25,1) --(-1.5,1);
		\draw[thick] (-1.5,0) -- (1.5,0);
		\draw[rounded corners,fill=white] (0-0.75,-0.25) rectangle (0+0.75,1.25);
		\node at (0,0.5) {$X_{AB}$};
	\end{tikzpicture}.
\end{align}
These elementary rules can be used to visually represent more complicated expressions -- like a trace of multiple partial transposes. The wiring formalism provides a pictorial representation for such objects and a visual framework for modifying them. In particular, it is possible to bend, as well as unentangle, index lines and rearrange tensor factors at will.
Table~\ref{tab:diagrams} collects several such modifications that are important for the arguments above.

\end{document}